\newcommand{\fA}{\mathcal{A}}
\newtheorem{theorem}{Theorem}[section]
\newtheorem{lemma}[theorem]{Lemma}
\newtheorem{meta-theorem}[theorem]{Meta-Theorem}
\newtheorem{remark}[theorem]{Remark}
\newtheorem{conjecture}[theorem]{Conjecture}
\newtheorem{observation}[theorem]{Observation}
\newtheorem{definition}[theorem]{Definition}
\newtheorem*{claim*}{Claim}
\definecolor{darkgreen}{rgb}{0,0.5,0}
\crefname{theorem}{Theorem}{Theorems}
\crefname{problem}{Problem}{Problems}
\crefname{proposition}{Proposition}{Propositions}
\crefname{observation}{Observation}{Observations}
\crefname{conjecture}{Conjecture}{Conjectures}
\Crefname{lemma}{Lemma}{Lemmas}
\algnewcommand\algorithmicswitch{\textbf{switch}}
\algnewcommand\algorithmiccase{\textbf{case}}
\newcommand{\poly}{\operatorname{poly}}
\newcommand{\eps}{\varepsilon}
\newcommand{\fG}{\mathcal{G}}
\newcommand{\fP}{\mathcal{P}}
\newcommand{\girth}{\mathrm{girth}}
\newcommand{\local}{$\mathsf{LOCAL}$\xspace}
\newcommand{\lca}{$\mathsf{LCA}$\xspace}
\newcommand{\volume}{$\mathsf{VOLUME}$\xspace}
\newcommand{\mpc}{$\mathsf{MPC}$\xspace}
\DeclareMathOperator{\tup}{tup}
\DeclareMathOperator{\ind}{ind}
\DeclareMathOperator{\seq}{seq}
\DeclareMathOperator{\dett}{det}
\renewcommand{\paragraph}[1]{\vspace{0.15cm}\noindent {\bf #1}:}
\newcommand{\FullOrShort}{full}
  \newcommand{\fullOnly}[1]{#1}
  \newcommand{\shortOnly}[1]{}
    \newcommand{\fullOnly}[1]{}
    \newcommand{\IncludePictures}[1]{}
\begin{document}

\title{The Randomized Local Computation Complexity\\of the Lovász Local Lemma}

\author{Sebastian Brandt\\ ETH Zurich \and Christoph Grunau\\ ETH Zurich \and V\'aclav Rozho\v{n}\\ ETH Zurich}

\maketitle

\begin{abstract}
    The Local Computation Algorithm (\lca) model is a popular model in the field of sublinear-time algorithms that measures the complexity of an algorithm by the number of probes the algorithm makes in the neighborhood of one node to determine that node's output. 
    
   In this paper we show that the randomized \lca complexity of the Lovász Local Lemma (LLL) on constant degree graphs is $\Theta(\log n)$. 
   The lower bound follows by proving an $\Omega(\log n)$ lower bound for the Sinkless Orientation problem introduced in [Brandt et al.\ STOC 2016]. This answers a question of [Rosenbaum, Suomela PODC 2020].
   
   Additionally, we show that every randomized \lca algorithm for a locally checkable problem with a probe complexity of $o(\sqrt{\log{n}})$ can be turned into a deterministic \lca algorithm with a probe complexity of $O(\log^* n)$. This improves exponentially upon the currently best known speed-up result from $o(\log \log n)$ to $O(\log^* n)$ implied by the result of [Chang, Pettie FOCS 2017] in the \local model.
   
   Finally, we show that for every fixed constant $c \geq 2$, the deterministic \volume complexity of $c$-coloring a bounded degree tree is $\Theta(n)$, where the \volume model is a close relative of the \lca model that was recently introduced by [Rosenbaum, Suomela PODC 2020].
\end{abstract}

\section{Introduction}
For many problems in the area of big data processing it is on one hand prohibitively expensive to read the whole input, while on the other hand one is only interested in a small portion of the output at a time. One such problem comes up in the context of social networks. Many social networks suggest users different users they might want to follow or be friends with. To make such a recommendation, it is computationally too expensive to process the complete social network. Instead, it is much more efficient to look at a small local neighborhood around a user, i.e., the friends of her friends or the followers of her followers. \\
Motivated by problems where one is only interested in small parts of the output at once, Rubinfeld et al. \cite{rubinfeld2011fast} and Alon et al. \cite{alon2012space} introduced the Local Computation Model (\lca). An \lca algorithm provides query access to a fixed solution for a computational problem. That is, instead of outputting the entire solution at once, a user can ask the algorithm about specific bits of the output. To answer these queries, the \lca algorithm has probe access to the input and the main complexity measure of an \lca algorithm is the number of input probes the algorithm needs to perform to answer a given query. The specific output queries a user can ask and the specific input probes the \lca algorithm can use to learn about the input depend on the specific scenario. In the context of graph problems, the output query usually asks about the output of a given node or edge and the \lca algorithm can usually perform probes to the input of the form ``What is the $j$-th neighbor of the $i$-th node?'', where each node has a unique ID from the set $[n]$. The answer to such a probe is the ID of the specific node together with additional local information associated with that node such as for example its degree. The most well-studied type of \lca algorithms are so-called stateless \lca algorithms. The only shared state between different queries of stateless \lca algorithms is a seed of random bits. In particular, this implies that the output of an \lca algorithm is independent of the order of the queries asked by the user. In this paper, by an \lca algorithm we always mean a stateless \lca algorithm.

\paragraph{Connections to the \local model}
The \lca model is closely related to the \local model. In particular, Parnas and Ron observed that an $O(T(n))$-round \local algorithm implies an \lca algorithm with a probe complexity of $\Delta^{O(T(n))}$ where $\Delta$ denotes the maximum degree of the input graph. The reason is that one can first learn the $O(T(n))$-hop neighborhood around a node with $\Delta^{O(T(n))}$ many probes and then simulate the \local algorithm around that node to determine its output. Motivated by this connection, a recurrent theme in the study of \lca algorithms is the question in which cases it is possible to go below this straightforward simulation result. A sample of such results include a deterministic $(\Delta+1)$-coloring \lca algorithm on constant degree graphs with a probe complexity of $O(\log^* n)$ \cite{even2014deterministic}, a randomized \lca algorithm for Maximal Independent Set with a probe complexity of $\Delta^{O(\log\log \Delta)} \log n$ \cite{ghaffari2019MIS} and computing an expected $O(\log s)$-approximation for Set Cover with a probe complexity of $(st)^{O(\log s \log \log t)}$ where $s$ denotes the maximum set size and $t$ denotes the maximum number of sets a given element can be contained in \cite{grunau2020improved}. 

\paragraph{Locally Checkable Labelings}
A very fruitful research direction was and is up to this date the study of the \local complexity landscape of so-called locally checkable labeling problems (LCLs) on constant degree graphs. Informally speaking, an LCL problem requires each vertex/edge to output one of constantly many symbols such that the output of all nodes in the local neighborhood around each node satisfies some constraints imposed by the LCL. 

While the \local complexity landscape for LCLs on constant degree graphs is not completely characterized up to this date, a lot of progress has already been made. In particular, each LCL problem belongs to one of the following four classes. \cref{fig:volume_big_picture} is a graphical illustration of these four classes. 
\begin{enumerate}[label=\Alph*]
    \item Problems with a complexity of $O(1)$.
    \item Basic symmetry-breaking tasks such as $(\Delta + 1)$-coloring with a complexity between $\Omega(\log \log^* n )$ and $O(\log^* n)$.
    \item Shattering problems such as $\Delta$-coloring with a randomized complexity of $\poly(\log\log n)$ and a deterministic complexity of $\poly(\log n)$.
    \item Global problems with a complexity of $\Omega(\log n)$.
\end{enumerate}

Furthermore, it is known that the complexity regimes corresponding to $B$ and $D$ are in some sense dense and that the multiplicative gap between the randomized and deterministic \local complexity of any LCL problem is at most polylogarithmic in $n$. 

\paragraph{Lovász Local Lemma}
Arguably, one of the most useful tools in distributed computing is the distributed Lovász Local Lemma (LLL).

The LLL states that a collection of bad events can simultaneously be avoided given that each bad event only happens with a small probability and each bad event only depends on a small number of other bad events. The Distributed LLL asks an algorithm to find a point in the underlying probability space in a distributed manner that avoids all the bad events.
Formally, each bad event corresponds to a vertex in the input graph and depends on some independent random variables. Two nodes are connected by an edge if the two corresponding bad events depend on a common random variable. The output of each node is a value for each random variable the corresponding bad event depends on such that none of the bad events occurs. 

It was proven by Chang and Pettie  \cite{chang2017time} that any randomized \local algorithm with a complexity of $o(\log n)$ can be sped up to run in time $T_{LLL}(n)$ which denotes the randomized \local complexity of the distributed LLL under any polynomial criterion (\cref{def:dist_lll}) on constant degree graphs. Currently, the best upper bound to solve the distributed LLL under any polynomial criterion on constant degree graphs is $\poly \log \log n$ randomized and $\poly\log n$ deterministic \cite{moser2010constructiveLLL, fischer2017sublogarithmic, RozhonG19, ghaffari2021improved}. 
This is polynomially larger than the currently best known lower bound of $\Omega(\log \log n)$ for randomized and $\Omega(\log n)$ for deterministic algorithms \cite{brandt_LLL_lower_bound,marks2013determinacy,chang2016exp_separation}. It is a major open problem in the area of \local algorithms to close this gap. 
Finally, the works of \cite{fischer2017sublogarithmic,chang2016exp_separation,chang2017time} show that any $o(\log \log n)$ round randomized algorithm or $o(\log n)$ round deterministic algorithm can be solved deterministically in $O(\log^*n)$ rounds.
The above works and results are part of the project of classification of LCLs in the \local model, and they imply that the class (C) of problems contains exactly those that do not belong to the classes (A) and (B) and can be solved by transforming the problem instance into an LLL instance with polynomial (but not exponential) criterion and then solving the LLL instance.
The precise transformation is given by \cite[Theorem 4]{chang2017time} and it is straightforward to check that it ensures that \emph{also in the LCA model} the complexity of any problem in class (C) is asymptotically at most the complexity of the LLL under any polynomial criterion.\footnote{By construction, the nodes of the created LLL instance are constant-radius neighborhoods of the nodes of the original input instance, and essentially there is an edge between two neighborhoods if they intersect. Hence, an LCA algorithm solving the LLL instance can be simulated on the original instance by probing, for each probe in the LLL instance, also its entire neighborhood up to some constant distance, which incurs only a constant-factor overhead. Note that we use here that the input instances have constant degree. Also note that all of this holds also in the \volume model.}

\paragraph{Our Work in \lca}
Our work contributes to the understanding of the fundamental problem of characterizing LCLs in the \lca model. 

While the classes (A) and (B) of LCL problems coincide for the \local and \lca model by \cite{PARNAS2007183, even2014deterministic, chang2016exp_separation}, and for the class (D) of global problems we have some preliminary results by \cite{RosenbaumSuomela2020volume_model}, it seems that essentially nothing is known for the class (C) of shattering problems. 

As our main contribution we show that the randomized \lca complexity of the distributed LLL is $\Theta(\log n)$. Before our work, only the trivial lower bound of $\Omega(\log\log n)$ coming from the \local model \cite{brandt_LLL_lower_bound} and only the trivial upper bound $2^{\poly\log\log n}$ that follows from the Parnas-Ron reduction and the \local algorithm of Fischer and Ghaffari \cite{fischer2017sublogarithmic} were known. 
Independently of our work, Dorobisz and Kozik \cite{dorobisz2021lca_lll_hypergraphs} achieved a polylogarithmic query complexity for the problem of hypergraph coloring. This is similar to our work as their problem can be formulated as an instance of LLL (with a less restrictive LLL criterion). 

\begin{theorem}
\label{thm:lll}
The randomized \lca complexity of the distributed LLL on constant degree graphs is $\Theta(\log n)$. 
The upper bound holds for the polynomial criterion $p \le (e\Delta)^{-c}$ for some $c = O(1)$, while the lower bound holds even for the exponential criterion $p \le 2^{-\Delta}$. 
\end{theorem}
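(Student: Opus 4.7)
The plan is to prove the two directions separately: for the upper bound I would combine a random initial assignment with a shattering-style component exploration, and for the lower bound I would reduce to Sinkless Orientation (SO), which itself is an LLL instance under the exponential criterion $p \le 2^{-\Delta}$, and establish an $\Omega(\log n)$ randomized \lca lower bound for SO on constant-degree graphs.

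\textbf{Upper bound.} I would draw a uniform assignment of all variables from the shared random seed. For a query on a variable $X$, first probe the constantly many neighbors needed to evaluate the bad events that depend on $X$. Let $H$ be the \emph{violation subgraph} of the dependency graph, i.e.\ the subgraph induced on the bad events currently violated by the sampled assignment. The key combinatorial claim is that under the polynomial criterion $p \le (e\Delta)^{-c}$ for a sufficiently large constant $c$, each connected component of $H$ has size $O(\log n)$ with probability $1 - 1/\poly(n)$; this follows from a standard shattering computation in the spirit of the Moser--Tardos analysis, where each additional vertex discovered by a BFS in $H$ contributes a multiplicative factor of at most $e\Delta\cdot p \ll 1$. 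Hence a BFS exploration of $H$ started from the bad events touching $X$ uses $O(\log n)$ probes and fully reveals the (small) violated component $C$. On $C$ we can then run any off-the-shelf LLL algorithm (for instance, brute-force) deterministically, breaking ties by unique IDs so that different queries that end up exploring the same component consistently agree on the output.

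\textbf{Lower bound.} SO on a $\Delta$-regular graph is exactly the LLL instance where the bad event ``vertex $v$ is a sink'' has probability $2^{-\Delta}$, so an \lca lower bound for SO transfers to LLL under the exponential criterion. I would prove the $\Omega(\log n)$ randomized \lca lower bound for SO on constant-degree high-girth graphs by exploiting the stateless nature of \lca algorithms: an \lca algorithm with $q = o(\log n)$ probes sees only a tree-shaped neighborhood when answering each query, and the outputs on well-separated edges are essentially independent conditional on the shared random seed. Combining this with an adversarial or round-elimination argument along a long induced cycle of a random $\Delta$-regular high-girth graph, one shows that the induced orientations on the cycle behave essentially like i.i.d.\ fair coin flips, so with constant probability some cycle-vertex ends up as a sink, contradicting correctness.

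\textbf{Main obstacle.} The upper bound hinges on showing that the BFS exploration really terminates within $O(\log n)$ probes with probability $1-1/\poly(n)$ \emph{uniformly} over queries, which is exactly what the polynomial criterion with a large enough constant $c$ buys (weaker criteria would leave a polylogarithmic tail and a larger probe bound). The substantively harder step, however, is upgrading the known $\Omega(\log n)$ \emph{deterministic} \local lower bound for SO to an $\Omega(\log n)$ \emph{randomized} \lca lower bound, since in \local randomness provably reduces the complexity of SO to $\Theta(\log \log n)$. My plan is to leverage two features that \lca has over randomized \local: the algorithm is stateless, so different queries cannot coordinate beyond the seed, and each query only sees a bounded-size probe-set, so one can couple the algorithm's behavior on well-separated queries to that on a tree. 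Turning this intuition into a clean indistinguishability argument against every randomized \lca algorithm is where I expect to spend most of the effort.
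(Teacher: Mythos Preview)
Your upper bound has a real (though fixable) gap. After drawing one full sample and isolating the violated component $C$, you cannot simply brute-force on $C$ while keeping the variables outside fixed: events adjacent to $C$ are currently satisfied and hence not in $C$, yet they share variables with events in $C$, so reassigning those shared variables can violate them. The LLL guarantees a global satisfying assignment, not one that extends an arbitrary fixed partial assignment on the complement of $\mathrm{vbl}(C)$. The paper avoids this by adapting the Fischer--Ghaffari pre-shattering phase to run in $O(1)$ \local rounds: only a \emph{subset} of the variables is fixed, in such a way that most events acquire conditional probability zero (not merely ``currently satisfied''), and the components of events with positive residual probability have size $O(\log n)$ w.h.p. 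Those components are then genuinely independent residual LLL instances, so brute-forcing each one is legitimate and cannot disturb its neighbors.

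The lower bound is where your plan has the substantive gap. The step ``orientations on the cycle behave essentially like i.i.d.\ fair coin flips'' is simply false: conditioned on the shared seed the algorithm is deterministic and is free to use the node IDs in each probe-set to orient edges (say, toward the larger ID), so disjointness of probe-sets buys you no independence at all. The paper's route is entirely different. It first derandomizes an $o(\log n)$ randomized \lca algorithm into a deterministic $o(n)$-probe \volume algorithm via a Chang--Kopelowitz--Pettie-style union bound; the difficulty is that a naive union bound over all ID-labeled bounded-degree trees ranges over $2^{\Omega(n^2)}$ instances (once one allows the $2^{O(n)}$-sized ID space needed for the subsequent speedup), which only converts $o(\sqrt{\log n})$ into $o(n)$. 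The missing idea is the \emph{ID graph}: a fixed constant-degree, girth-$\Omega(n)$, chromatic-number-$\ge \Delta$ graph on $2^{O(n)}$ vertices that restricts which IDs may sit on adjacent input nodes, cutting the number of admissible labeled trees to $2^{O(n)}$ and pushing the derandomization through from $o(\log n)$. One then argues, by a round-elimination proof for Sinkless Orientation that still works relative to these ID-graph constraints, that no constant-round deterministic \local algorithm exists, yielding the contradiction. Your direct probabilistic sketch does not supply any substitute for this mechanism.
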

In fact, for the minimally more restrictive criterion $p < 2^{-\Delta}$, the distributed LLL can already be solved in $O(\log^* n)$ rounds in the \local model \cite{brandt_maus_uitto2019tightLLL,brandt_grunau_rozhon2020tightLLL}, which implies also a probe complexity of $O(\log^* n)$ in the \lca model~\cite{even2014deterministic}.  

Second, we show the following general speedup theorem. 

\begin{restatable}{theorem}{speedup}
\label{thm:speedup}
For any LCL $\Pi$, if there is a randomized \lca algorithm that solves $\Pi$ and has a probe complexity of $o(\sqrt{\log n})$, then there is also a deterministic \lca algorithm for $\Pi$ with a probe complexity of $O(\log^* n)$. 
\end{restatable}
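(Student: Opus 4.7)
The plan is to invoke the Chang--Kopelowitz--Pettie (CKP) gap theorem in the \local model, which says that every LCL with randomized \local complexity $o(\log\log n)$ already admits a deterministic \local algorithm with complexity $O(\log^* n)$. Such a deterministic algorithm takes a Cole--Vishkin-style iterated symmetry-breaking form, and can be translated via the techniques of Even et al.\ (which already give a deterministic $(\Delta+1)$-coloring \lca algorithm with $O(\log^* n)$ probes) into a deterministic \lca algorithm with $O(\log^* n)$ probes, rather than the generic $\Delta^{O(\log^* n)}$ bound produced by Parnas--Ron. The task therefore reduces to showing that a randomized \lca with probe complexity $q(n) = o(\sqrt{\log n})$ induces a randomized \local algorithm for $\Pi$ with round complexity $o(\log\log n)$.

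For this upgrade I would exploit a property that \lca algorithms enjoy on top of the generic \local simulation: the algorithm is stateless, so its output at any node depends only on the radius-$q(n)$ ball of IDs around that node together with a globally shared random seed, and in total it reads at most $\Delta^{q(n)} = 2^{O(\sqrt{\log n})}$ IDs. Given an $n$-node input $G$, I would simulate the \lca on a ``virtual'' instance of larger size $N$, chosen so that $\sqrt{\log N}$ aligns with $\log\log n$, by reassigning the IDs of $G$ within the larger universe $[N]$ while keeping the graph structure intact. The induced randomized \local algorithm on $G$ then has round complexity $q(N) = o(\log\log n)$. The ``squaring'' gap between $\sqrt{\log n}$ and $\log\log n$ corresponds precisely to the fact that the \lca only accesses $\Delta^{q(n)}$ IDs, which leaves ample room to embed those IDs in $[N]$ without the algorithm noticing the discrepancy.

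The main obstacle will be executing this virtual-ID reassignment rigorously. One has to argue that (i) the assigned IDs are collision-free on the part of the virtual graph that the \lca actually probes, so the \lca's promised per-query error guarantee carries over; (ii) the random seed used by the \lca on the virtual instance can be pulled back consistently to yield a single coherent randomized \local algorithm on $G$, since the \lca's outputs at different queries are coupled only through this seed; and (iii) the output on the image of $G$ inside the virtual instance is in fact a valid $\Pi$-labeling of $G$. Once these pieces are in place, the chain ``randomized \lca with $o(\sqrt{\log n})$ probes $\Rightarrow$ randomized \local with $o(\log\log n)$ rounds $\Rightarrow$ deterministic \local with $O(\log^* n)$ rounds $\Rightarrow$ deterministic \lca with $O(\log^* n)$ probes'' closes the argument.
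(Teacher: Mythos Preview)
Your chain has a genuine gap at the first link, the step ``randomized \lca with $o(\sqrt{\log n})$ probes $\Rightarrow$ randomized \local with $o(\log\log n)$ rounds.'' The virtual-instance trick you sketch cannot produce this. If you run the \lca with size parameter $N$, its probe complexity is $q(N)=o(\sqrt{\log N})$; to force $q(N)=o(\log\log n)$ you would need $\sqrt{\log N}=O(\log\log n)$, i.e.\ $N\le 2^{O((\log\log n)^2)}\ll n$. That is a \emph{smaller} universe, not a larger one, so you cannot assign the $n$ nodes of $G$ unique IDs from $[N]$. Conversely, taking $N\ge n$ only makes $q(N)$ larger. The ``squaring gap'' you invoke points the wrong way, and the remark that the \lca ``reads at most $\Delta^{q(n)}$ IDs'' (in fact it reads at most $q(n)$ nodes) does not rescue it: the \local round complexity you obtain from simulating the \lca is still $q(N)$, regardless of how few nodes are actually touched. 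In short, there is no known (or apparent) mechanism for turning $o(\sqrt{\log n})$ \lca probes into $o(\log\log n)$ \local rounds, and even Theorem~3.3 (G\"o\"os et al.), which is the strongest such transfer in the paper, only yields $o(\log n)$ \local rounds in the deterministic case.

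The paper's proof does not pass through the randomized \local model at all. It derandomizes directly inside the \volume model: after stripping far probes (Lemma~3.2), it runs the randomized algorithm with size parameter $N=2^{O(n^2)}$ and union-bounds over all $2^{O(n^2)}$ bounded-degree $n$-node graphs equipped with IDs from an exponential range. A fixed random seed then works on every such graph, giving a deterministic \volume algorithm with probe complexity $t(N)=o(\sqrt{\log N})=o(n)$ relative to exponential IDs; this is exactly where the $\sqrt{\cdot}$ threshold originates. The final speedup to $O(\log^*n)$ comes from computing a distance-$(n_0+r)$ coloring via Even et al.\ and feeding those colors as IDs to the deterministic algorithm run with a fixed constant size parameter $n_0$. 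Your final step (deterministic $O(\log^*n)$ \local $\Rightarrow$ deterministic $O(\log^*n)$ \lca) is in the same spirit and would be fine, but the argument collapses before you reach it.
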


Putting the two theorems together, we get that the randomized \lca complexity of all problems in class (C) is in $\Omega(\sqrt{\log n})$ and $O(\log n)$. 
This almost settles the randomized complexity of all problems in class (C). 
We conjecture that the square root is not tight and the following is, in fact, the case. 

\begin{conjecture}
Any randomized \lca algorithm that solves an LCL $\Pi$ with probe complexity $o(\log n)$ can be turned into a deterministic \lca algorithm that solves $\Pi $ and has a probe complexity of $O(\log^* n)$. 
\end{conjecture}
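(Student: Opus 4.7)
The plan is to generalize the derandomization argument behind \cref{thm:speedup} to the full sublogarithmic regime. The guiding template is the \emph{gap theorem} approach pioneered by Chang and Pettie \cite{chang2017time} in the \local model, which established that no LCL has randomized \local complexity strictly between $\omega(\log^* n)$ and $o(T_{LLL}(n))$. Since \cref{thm:lll} pins down the complexity of LLL in the \lca model at $\Theta(\log n)$, the conjecture is exactly the \lca analogue of this gap.

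A natural first attempt is a straightforward LLL-based reduction: given a randomized \lca algorithm $\mathcal{A}$ for $\Pi$ with probe complexity $q(n) = o(\log n)$ and per-node failure probability $n^{-\Omega(1)}$ (after amplification), form an LLL instance whose bad events $B_v$ are ``$\mathcal{A}$ violates the LCL constraint at $v$''. Each $B_v$ depends on a $q(n)$-neighborhood, so the LLL dependency degree is at most $\Delta^{O(q(n))}$ and the polynomial LLL criterion is satisfied. However, invoking \cref{thm:lll} on this instance yields an \lca algorithm with probe complexity $O(\log n)$, matching the trivial bound. Hence the naive reduction does not suffice and a more refined argument is needed.

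The refined strategy would be to iteratively shrink the probe complexity until \cref{thm:speedup} can be applied. Concretely, I would aim for a \emph{bootstrap lemma}: any randomized \lca algorithm for an LCL with probe complexity $q(n) = o(\log n)$ can be converted into one with strictly smaller probe complexity, say by a polynomial factor. Iterating the bootstrap a sufficient number of times drives the probe complexity below $\sqrt{\log n}$, after which \cref{thm:speedup} finishes the job. The bootstrap step itself would rely on a finer shattering argument: after amplification, the set of nodes where the algorithm genuinely needs its full $q(n)$-neighborhood ought to be sparse, and on its complement the algorithm can be restricted to a smaller neighborhood, combined with a local cleanup driven by a weaker LLL instance on the sparse set.

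The main obstacle is precisely this bootstrap lemma. The shattering arguments that yield speedups in the \local model crucially exploit the ability to subsequently run \local rounds on the shattered remainder, whereas in the \lca model every output query must be answered independently within the target probe budget. Bridging this gap likely requires exploiting fine-grained structure of the LLL instances arising from sublogarithmic \lca algorithms---for example, showing that such instances can always be placed under the \emph{exponential} LLL criterion $p \le 2^{-\Delta}$, which admits an $O(\log^* n)$ \lca algorithm via \cite{brandt_maus_uitto2019tightLLL,brandt_grunau_rozhon2020tightLLL,even2014deterministic}. Establishing such a structural property, or an analogous pseudorandomness-style statement about the randomness usage of sublogarithmic \lca algorithms, would in my view be the heart of any proof of the conjecture.
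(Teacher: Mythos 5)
This is stated in the paper as an open \emph{conjecture}, not a theorem: the authors explicitly prove only the weaker $o(\sqrt{\log n}) \Rightarrow O(\log^* n)$ gap (\cref{thm:speedup}) and leave closing the range $[\,\Omega(\sqrt{\log n}), o(\log n)\,]$ as future work. There is therefore no proof in the paper to compare your attempt against, and your write-up does not supply one either---by your own account, the ``bootstrap lemma'' that would drive the probe complexity below $\sqrt{\log n}$ is unestablished, so what you have is a research plan rather than a proof.

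That said, your diagnosis of the obstructions is accurate and matches the paper's implicit reasoning. You correctly observe that the naive route---encode the failure events of a $q(n)=o(\log n)$-probe algorithm as an LLL instance and solve it with \cref{thm:lll}---is circular, since the $\Theta(\log n)$ LLL \lca complexity just returns the trivial bound. You also correctly identify why the \local-model shattering machinery does not port over: in \lca each query must be answered independently within the probe budget, so there is no analogue of ``now run more rounds on the small shattered components.'' One caution on your proposed escape hatch: appealing to the exponential-criterion LLL ($p \le 2^{-\Delta}$, solvable in $O(\log^* n)$) would require the dependency degree of the induced LLL instance to be $O(1)$, but for a $q(n)$-probe algorithm the dependency degree is $\Delta^{\Theta(q(n))} = n^{o(1)}$, which is unbounded; you would need to argue that the instance \emph{collapses} to a bounded-degree one, which is itself essentially the conjecture in disguise. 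A more promising axis, in light of how the paper obtains $\Omega(\log n)$ for sinkless orientation, would be to sharpen the union-bound accounting (as the paper does via ID graphs on trees) to general bounded-degree graphs---the bottleneck in \cref{lem:volume_derandomization} is the $2^{O(n^2)}$ count of ID-labelings, and any structural restriction that brings this to $2^{O(n)}$ for general graphs while preserving \cref{lem:exponential_id_speedup} would immediately yield the conjectured gap.
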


\begin{figure}
    \centering
    \includegraphics[width = .435\textwidth]{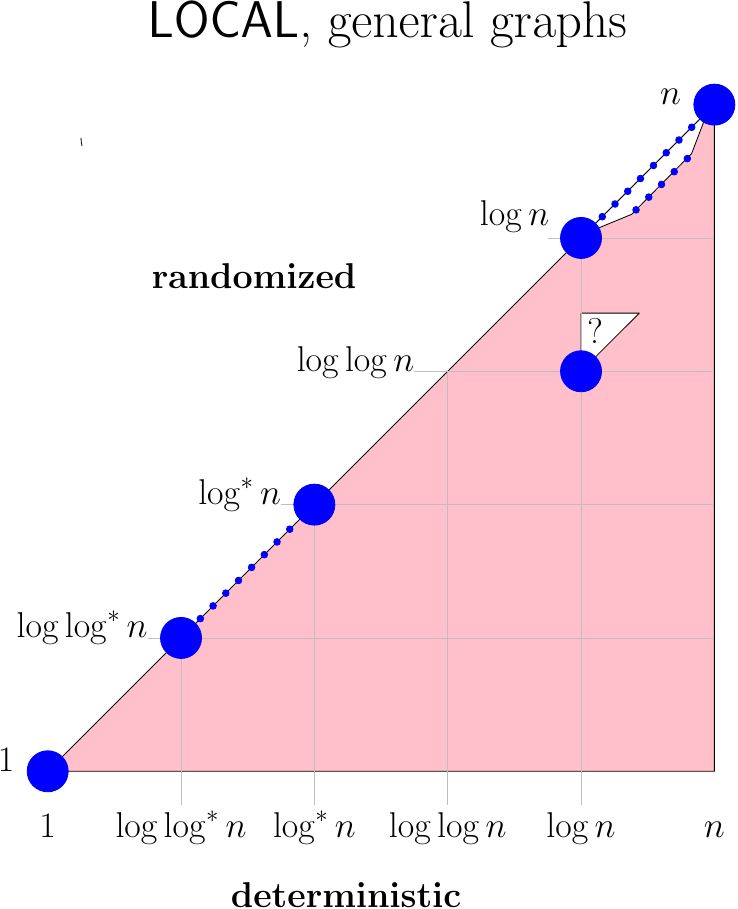}
    \includegraphics[width = .545\textwidth]{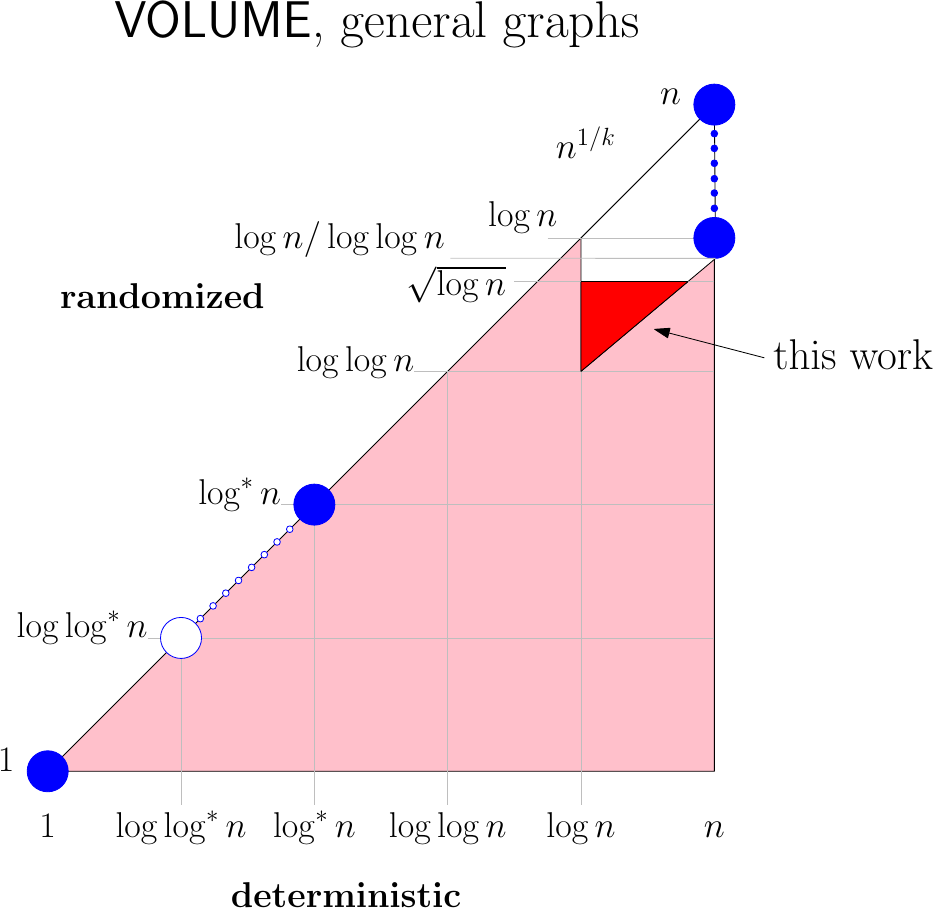}
    \caption{On the left we see the landscape of LCLs in the \local model that consists of four types of problems. 
    Note that we are especially interested in class (C) of problems that can be solved by reducing them to the distributed LLL. Their \local complexity is conjectured to be $\Theta(\log\log n)$ randomized and $\Theta(\log n)$ deterministic, but the currently known best upper bounds are only $\poly \log \log n$ randomized and $\poly \log n$ deterministic. \\ On the right we see the landscape of LCLs in the \volume model. Our new \cref{thm:speedup} implies that there are no LCLs in the bright red area. The randomized complexity of LLL in \lca is moreover settled to $\Theta(\log n)$.  }
    \label{fig:volume_big_picture}
\end{figure}

\paragraph{\volume model}
Motivated by the success story of studying the \local complexity landscape of LCLs on bounded degree graphs, Rosenbaum and Suomela \cite{RosenbaumSuomela2020volume_model} initiated the study of the complexity landscape of LCLs on constant degree graphs in the so-called \volume model. The \volume model is a close relative of the \lca model. The main difference between the two models is that the \volume model does not allow so-called far probes, that is, a \volume model algorithm can only probe a connected region around the queried vertex. Also, each node has a unique identifier in the range $\{1,2,\ldots, \poly(n)\}$ instead of $[n]$ and the nodes only have access to private randomness instead of shared randomness. The \volume model is a bit cleaner to work with than the \lca model. Nevertheless, there are generic simulation results that often allow to transfer upper and lower bounds between the \lca and the \volume model. Mainly by transferring known results and techniques from the \local model to the \volume model, Rosenbaum and Suomela obtained the result illustrated in \cref{fig:volume_big_picture}---the first rough outline of the complexity landscape of LCLs on constant degree graphs in the \volume model.

\paragraph{Our Work in the \volume model}

We show the following theorem about coloring bounded degree trees in the \volume model.

\begin{restatable}{theorem}{coloring} 

\label{theorem:constant_coloring_theorem}
Let $c \geq 2$ be arbitrary. Then, the deterministic \volume complexity of $c$-coloring bounded degree trees with maximum degree $\poly(c)$ is $\Theta(n)$.
\end{restatable}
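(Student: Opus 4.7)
Plan:

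The upper bound $O(n)$ follows from a trivial BFS strategy: on any query, probe the whole connected component of the queried vertex (at most $n$ probes since probes are connected in the \volume model), compute a canonical $c$-coloring of it (e.g., the lexicographically smallest coloring under the ID order), and return the color of the queried vertex. Since the chosen coloring depends only on the component and not on the query, consistency across queries is automatic.

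For the lower bound I would treat $c = 2$ first and then bootstrap to $c \geq 3$. Let $A$ be a deterministic \volume algorithm that $2$-colors bounded-degree trees with probe complexity $T$; I will construct an adversarial path of length $n$ on which $A$ fails, forcing $T = \Omega(n)$. The adversary simulates $A$ sequentially on two queries $v$ and $u$. For each query, it answers $A$'s probes by revealing fresh distinct IDs, maintaining a connected subpath around the query of size at most $T+1$, and it ensures that the two revealed subpaths $S_v$ and $S_u$ remain disjoint (possible whenever $2T + 2 < n$). After $A$ commits to colors $c_v$ and $c_u$, the adversary finalizes a path $P$ of total length $n$ by concatenating $S_v$, a fresh middle subpath of length $K$, and $S_u$, together with tails of adjustable lengths on both ends. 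Since $P$ has exactly two valid $2$-colorings (which are swaps of each other), the pair $(c_v, c_u)$ is consistent with some valid $2$-coloring of $P$ if and only if $(c_v = c_u)$ matches the parity of $d_P(v, u)$. The adversary controls $d_P(v, u) \pmod{2}$ through the parity of $K$ (compensating with the tail lengths to preserve total length $n$) and picks it so that $(c_v, c_u)$ is inconsistent with both valid $2$-colorings. This contradicts correctness of $A$ and yields $T = \Omega(n)$.

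For $c \geq 3$ the plan is to lift the $c = 2$ argument by working on an adversarial tree of maximum degree $O(c) = O(1)$ whose $c$-colorings are rigid along a distinguished ``backbone''. Concretely, I would attach to each backbone vertex a small constant-size subtree gadget so that in every valid $c$-coloring, the backbone vertex is forced to use a color from a fixed $2$-element palette. With such gadgets in place, the backbone is effectively $2$-colored in any valid $c$-coloring of the gadgeted tree, and the parity-adversary argument from the $c = 2$ case transfers verbatim, with $v$ and $u$ chosen on the backbone and $K$ measuring the number of backbone vertices between the two revealed regions.

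The main obstacle is constructing the gadget for general $c$: it must force the attached backbone vertex into a fixed $2$-element color subset in every valid $c$-coloring while keeping the whole tree bounded-degree and still globally $c$-colorable. Since trees are triangle-free, no clique-based forcing is available, so the gadget's rigidity must come from a purely combinatorial subtree construction (built iteratively, each level pinning down one more color choice). Once such a gadget is in hand, the rest of the proof is routine bookkeeping: verifying that the adversary can keep the two revealed gadget-augmented regions disjoint, that the connecting backbone segment can be given either parity, and that the resulting tree is consistent with every probe answer given to $A$.
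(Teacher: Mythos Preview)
Your $c = 2$ argument is fine: the parity adversary on a path is a clean and correct way to get $\Omega(n)$ for $2$-coloring.

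The gap is in the bootstrap to $c \geq 3$. The gadget you are hoping for does not exist. In any tree $T$ and for any $c \geq 2$, every vertex $v$ can receive every color $i \in [c]$ in some valid $c$-coloring: root $T$ at $v$, assign $v$ color $i$, and greedily color each child with any color different from its parent (there are $c - 1 \geq 1$ choices). In particular, attaching a subtree to a backbone vertex places no constraint whatsoever on that vertex's color when $c \geq 3$, because the subtree is itself $2$-colorable and hence $c$-colorable avoiding any single forbidden color at its attachment point. So there is no constant-size (or any-size) subtree gadget that forces a backbone vertex into a fixed $2$-element palette; the ``iterative pinning'' you allude to cannot get started. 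Your parity argument has nothing to leverage once $c \geq 3$, since for any committed pair $(c_v, c_u)$ with $c_v, c_u \in [c]$ and any path (or tree) containing $v$ and $u$, there is a valid $c$-coloring consistent with both choices.

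The paper's route is entirely different and does not separate $c = 2$ from $c \geq 3$. It fools the algorithm by running it on a bounded-degree graph $G$ of girth $\Omega(\log n)$ and chromatic number $> c$ (Bollob\'as), padded out to an infinite regular graph $H$ with random IDs from $[n^{10}]$ so that the algorithm cannot detect duplicate IDs or reach a cycle with $o(n)$ probes. Since $\chi(G) > c$, two adjacent vertices of $G$ must receive the same color; one then extracts an honest $n$-vertex tree from the probed region on which the algorithm fails. The point is that the non-tree structure (and hence the coloring obstruction) lives in the fooling instance, not in the final witness tree---which is exactly what is needed, since trees themselves are too flexible to support local rigidity gadgets for $c \geq 3$.
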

We note that the deterministic and randomized \local complexity for the same problem is $\Theta(\log n)$ assuming $c \geq 3$.
It would be very interesting to extend the result to the \lca model. The main reason we did not manage to extend it to the \lca model are far probes. These are especially difficult to handle if we want the identifiers from $[n]$. If we do not allow the \lca algorithm to perform far probes, but we want identifiers from $[n]$ instead of $\poly(n)$, then an approach similar to ours gives an $\Omega(n^{1-o(1)})$ lower bound. Our argument also breaks down for randomized algorithms. Hence, it is an interesting problem to prove any randomized polynomial lower bound or to come up with an efficient randomized algorithm.

\paragraph{Further Related Work: Connection of \lca to Parallel Algorithms}
As the only shared state between queries of \lca algorithms is the random seed, after distributing the random seed to all processors, the processors can answer queries independent of each other and therefore in parallel. Moreover, many randomized \lca algorithms also work with $k$-wise independent random bits for $k = O(\poly(\log n))$ and hence standard techniques often allow for random seeds of polylogarithmic length \cite{alon2012space}. \\
There also exists a close connection between the Massively Parallel Computation (\mpc) model \cite{karloff2010mpc}---a theoretical model to study map-reduce type algorithms---and the \lca model. In the \mpc model, the input of a computational problem is distributed across multiple machines. The computation proceeds in synchronous round. At the beginning of each round, each machine can perform some local computation. Afterwards, each machine can send and receive messages to other machines with the restriction that each machine can send and receive at most as many bits as the size of its local memory. The most stringent regime in the \mpc model is the sublinear memory regime where each machine has a local memory of $O(n^{\alpha})$ bits for some $\alpha \in (0,1)$. In this regime, a machine cannot get a global view of the graph. Hence, most known algorithms gather the local neighborhood around each node followed by executing a local algorithm on the neighborhood to deduce the output of a node. However, the local neighborhood around a node often contains too many vertices to fit into a single machine. In such a scenario, it is important that we can compute the output of a node by only considering a minuscule fraction of its local neighborhood. This is exactly what \lca algorithms try to accomplish. In particular, Ghaffari and Uitto \cite{ghaffari2019MIS} devised a sparsification technique that resulted in state-of-the-art algorithms for Maximal Independent Set both in the \lca model and the \mpc model. 
Moreover, \cite{RosenbaumSuomela2020volume_model} stated a generic result that directly allows to transfer results from the \lca model to the \mpc model.

\section{Our Method in a Nutshell}

In this section we informally present the high-level ideas of our proofs. 

\paragraph{The Gap Result of \cref{thm:speedup}}

We start to describe the high-level idea of the proof that there is no LCL with a randomized/deterministic \lca /\volume complexity between $\omega(\log^* n)$ and $o(\sqrt{\log n})$. 
This is the conceptually simplest result and it works by directly adapting ideas known from the \local model. 
The main trick is to consider the deterministic \volume model where the identifiers can come from an exponential instead of a polynomial range (cf. Section 7.1 in \cite{RosenbaumSuomela2020volume_model}). 
The result then follows from two observations. First, a variant of the Chang-Pettie speedup \cite{chang2017time} shows that any \volume algorithm with a probe complexity of $o(n)$ that works with exponential IDs can be sped up to have a probe complexity of $\Theta(\log^*n)$. 
Second, a variant of the Chang-Kopelowitz-Pettie derandomization \cite{chang2016exp_separation} shows that any randomized algorithm with probe complexity $o(\sqrt{\log n})$ can be derandomized to give a deterministic $o(n)$-probe algorithm that works with exponential IDs. The root comes from the exponential IDs: during the argument we apply a union bound over all bounded degree $n$-node graphs equipped with unique IDs from an exponential range.

\paragraph{The LLL Complexity of \cref{thm:lll}}

The upper bound can directly be proven by adapting the LLL algorithm of \cite{fischer2017sublogarithmic} for the \local model to the \volume  model. To prove an $\Omega(\log n)$ lower bound, we prove a corresponding lower bound of $\Omega(\log n)$ for the Sinkless Orientation Problem, which can be seen as an instance of the distributed LLL (\cref{def:sinkless_orientation}). To prove this lower bound, we use the same high-level proof idea as described in the previous paragraph. However, to get a tight $\Omega(\log n)$ lower bound we need additional ideas. More concretely, the number $\sqrt{\log n}$ is an artifact of a union bound over  $2^{O(n^2)}$ many non-isomorphic ID-labeled graphs. As we show the Sinkless Orientation lower bound on bounded-degree trees, it is actually sufficient to union bound over all non-isomorphic ID-labeled bounded-degree trees. As the number of non-isomorphic unlabeled trees is $2^{O(n)}$, we already made progress. However, this itself is still not enough as we need identifiers from an exponential range to speed the algorithm up to $\Theta(\log^* n)$ and there are $2^{O(n^2)}$ many ways to assign unique  exponential IDs. Note that even if we would assign IDs from a polynomial range, there still would be $2^{O(n\log n)}$ ways to do it, which would only allow to hope for an $\Omega(\log n/\log \log n)$ bound.

To circumvent this issue we borrow an idea from a parallel paper~\cite{brandt_chang_grebik_grunau_rozhon_vidnyaszky2021LCLs_on_trees_descriptive} where the technique of ID graphs is developed to overcome a different issue. 
The idea is as follows: we will work in the deterministic model with exponential IDs. 
However, we promise that the ID assignment satisfies certain additional properties. More concretely, we construct a so-called ID graph (which is not to be confused with the actual input graph). Each node in the ID graph corresponds to one of the exponentially many IDs. Moreover, the maximum degree of the ID graph is constant and two nodes in the input graph that are neighbors can only be assigned IDs that are neighbors in the ID graph.  Having restricted the ID assignment in this way, it turns out that we only need to union bound over $2^{O(n)}$ different ID-labeled trees. Hence, a $o(\log n)$ randomized \volume algorithm would imply a $o(n)$ deterministic \volume algorithm that works on graphs with exponential IDs that satisfy the constraints imposed by the ID graph. Unfortunately, due to this additional restriction, we cannot simply speed-up the algorithm to the $\Theta(\log^* n)$ complexity. 
However, it turns out that the famous round elimination lower bound for Sinkless Orientation from \cite{brandt_LLL_lower_bound} works even relative to an ID graph (the formal proof is in fact simpler as one does not pass through a randomized model). 
This finishes the lower bound proof for the \volume model. This directly leads to the same lower bound for the \lca model by a result of \cite{goos_nonlocal_probes_16}.

\paragraph{The Deterministic \volume Complexity of Coloring Bounded Degree Trees with Constantly Many Colors is $\Theta(n)$ (\cref{theorem:constant_coloring_theorem})}

We first sketch a well-known proof of the $\Omega(\log n)$ \local lower bound for the same problem. We then show how to adapt this proof to the \volume model. To prove a \local lower bound of $\Omega(\log n)$, one fools the algorithm by running it on a graph having a girth of $\Omega(\log n)$ and a large constant chromatic number instead of a tree. Due to the high girth, the graph looks in the $o(\log n)$-hop neighborhood around each node like a tree. Hence, any \local algorithm with a round complexity of $o(\log n)$ cannot detect that it is not run on a valid input. Moreover, as the chromatic number of this high-girth graph is strictly larger than the constantly many colors available to the algorithm, there need to exist two neighboring nodes that get assigned the same color by the \local algorithm. To arrive at a contradiction, one can now construct a tree that contains these two neighboring nodes and moreover the \local algorithm will again assign these two neighboring nodes the same color. The main difficulty in transferring this proof to the \volume model is that $o(n)$ probes might suffice to find a cycle in the high-girth graph. Hence, the \volume algorithm might detect that the input is not a tree. To make it harder for the \volume algorithm to find a cycle, we add additional vertices and edges to this high-girth graph without introducing any new cycles. In fact, the resulting graph will have infinitely many vertices (though one could restrict oneself to work with a finite graph). As we still want to give the deterministic algorithm with a probe complexity of $o(n)$ the illusion that the graph only contains $n$ many vertices, we assign each node in the graph an identifier from $[n^{10}]$. This identifier can of course not be unique. In order to prevent the \volume algorithm from detecting duplicate identifiers, we assign each node an identifier uniformly and independently at random. Assigning identifiers in that way, it is unlikely for the algorithm to find a duplicate ID. Moreover, the random identifiers do not provide any information about the topology of the graph and thus one can show that it is unlikely that the algorithm finds a cycle. By a probabilistic method argument, this allows us to arrive at a contradiction in a similar manner as in the \local model lower bound.
\subsection{Definitions}

\paragraph{Notation}

We use the classical graph-theoretical notation, e.g., we write $G=(V,E)$ for an unoriented graph. 
A half-edge is a pair $h = (v,e)$, where $v \in V$, and $e \in E$ is an edge incident to $v$. 
Often we assume that $G$ additionally carries a labeling of vertices or half-edges.
We use $B_G(u,r)$ to denote the ball of radius $r$ around a node $u$ in $G$. 
When talking about half-edges in $B_G(u,r)$, we talk about all half-edges $(v,e)$ such that $v \in B_G(u, r)$. For example, $B_G(u, 0)$ contains all half-edges incident to $u$. 

\begin{definition}[LCLs]
An LCL problem (or simply LCL) $\Pi$ for a constant degree graph is a quadruple $(\Sigma_{in}, \Sigma_{out}, r, \fP)$ where $\Sigma_{in}$ and $\Sigma_{out}$ are finite sets, $r$ is a positive integer, and $\fP$ is a finite collection of $\Sigma_{in}$-$\Sigma_{out}$-labeled graphs.
A correct solution for an LCL problem $\Pi$ on a $\Sigma_{in}$-labeled graph $(G, f_{in})$ is given by a half-edge labeling $f_{out} \colon H(G) \to \Sigma_{out}$ such that, for every node $v \in V(G)$, the triple $(B_G(v,r), f'_{in}, f'_{out})$ is isomorphic to a member of $\fP$, where $f'_{in}$ and $f'_{out}$ are the restriction of $f_{in}$ and $f_{out}$, respectively, to $B_G(v,r)$.
\end{definition}

Intuitively, the collection $\fP$ provides the constraints of the problem by specifying how a correct output looks \emph{locally}, depending on the respective local input.
From the definition of a correct solution for an LCL problem it follows that members of $\fP$ that have radius $> r$ can be ignored.

\begin{definition}[\lca model \cite{rubinfeld2011fast} \cite{alon2012space}]
\label{def:lca}
In the \lca model, each node is assigned a unique ID from the set $[n]$. Moreover, each node is equipped with a port numbering of its edges and each node might have an additional input labeling. The \lca algorithm needs to answer queries. That is, given a vertex/edge, it needs to output the local solution of the vertex/edge in such a way that combining the answers of all vertices/edges constitutes a valid solution. To answer a query,  the algorithm can probe the input graph. A probe consists of an integer $i \in [n]$ and a port number and the answer to the probe is the local information associated with the other endpoint of the edge corresponding to the specific port number of the vertex with ID $i$. The answer to a query is only allowed to depend on the input graph itself and possibly a shared random bit string in case of randomized \lca algorithms. The complexity of an \lca algorithm is defined as the maximum number  of probes the algorithm needs to perform to answer a given query, where the maximum is taken over all input graphs and all query vertices/edges. A randomized \lca algorithm needs to produce a valid complete output (obtained by answering the query for each vertex) with probability $1 - 1/n^c$ for any desirably large constant $c$.
\end{definition}
\begin{definition}[\volume model \cite{RosenbaumSuomela2020volume_model}]
The \volume model is very similar to the \lca model and hence we only discuss the differences. The IDs in the \volume model are from the set $\{1,2,\ldots,\poly(n)\}$, as in the \local model, instead of $[n]$. Moreover, a \volume model algorithm is confined to probe a connected region. In the case of randomized \volume algorithms, each node has a private source of random bits which is considered as part of the local information and is therefore returned together with the ID of a given vertex. We note that the shared randomness of the \lca model is strictly stronger than the private randomness of the \volume model and therefore the \lca model is strictly more powerful than the \volume model. 
\label{def:volume}

\end{definition}

\begin{definition}[\local model \cite{linial92}, \cite{peleg00}]
 
\label{def:local}
In the \local model of distributed computing, the goal is to compute a graph problem in a network. Each node in the network corresponds to a computational entity and is equipped with a unique identifier in $\{1,2,\ldots,\poly(n)\}$. In the beginning, each node in the network only knows its ID, some global parameters like $n$ and the maximum degree $\Delta$ of the network, and perhaps some local input. Computation proceeds in synchronous rounds. That is, in each round, each processor can first perform unbounded local computation and then send each of its neighbors a message of unbounded length. Each node has to decide at some point that it terminates and then it must output its local part of the global solution to the given problem. The round complexity of a distributed algorithm is the number of rounds until the last round terminates.
\end{definition}

\begin{definition}[Sinkless Orientation]
\label{def:sinkless_orientation}
The Sinkless Orientation problem asks to orient each edge of a given input graph in such a way that each vertex of sufficiently high constant degree is incident to at least one outgoing edge.
\end{definition}

\begin{lemma}[Lovász Local Lemma (LLL), \cite{lovasz74}]
\label{lem:LLL}
We denote with $\{X_1, X_2\ldots,X_m\}$ a set of mutually independent random variables and with $\mathcal{E}_1, \ldots, \mathcal{E}_n$ probabilistic events. Each $\mathcal{E}_i$ is a function of some subset of the random variables $X_1, X_2,\ldots, X_m$ and this subset is denoted by vbl($\mathcal{E}_i$). We say that $\mathcal{E}_i$ and $\mathcal{E}_j$ depend on a common random variable if $vbl(\mathcal{E}_i) \cap vbl(\mathcal{E}_j) \neq \emptyset$. Assume that there is some $p < 1$ such that for each $1 \leq i \leq n$, we have $P(\mathcal{E}_i) \leq p$, and let $d$ be a positive integer such that each $\mathcal{E}_i$ shares a random variable with at most $d$ other $\mathcal{E}_j$, $j \neq i$. If $4pd \leq 1$, then there exists an assignment of values to the random variables such that none of the events $\mathcal{E}_i$ occurs. 
\end{lemma}

\begin{definition}[Distributed Lovász Local Lemma]
\label{def:dist_lll}
The constructive LLL asks to find a concrete assignment of the random variables $X_1,X_2,\ldots,X_m$ such that none of the bad events $\mathcal{E}_1, \mathcal{E}_2, \ldots, \mathcal{E}_m$ occurs. In the Distributed LLL, the set of nodes of the input graph is simply the set of bad events $\{\mathcal{E}_1, \mathcal{E}_2, \ldots, \mathcal{E}_m\}$. Moreover, $\mathcal{E}_i$ and $\mathcal{E}_j$ are connected by an edge iff $vbl(\mathcal{E}_i) \cap vbl(\mathcal{E}_j) \neq \emptyset$ and $i \neq j$. In the end, each node $\mathcal{E}_i$ needs to know the assignment of values to all the random variables in $vbl(\mathcal{E}_i)$. These assignments need to be consistent and they need to simultaneously avoid all the bad events. Often, one considers variants of the LLL that further restrict the space of allowed input instances by replacing the criterion $4pd \leq 1$ with a more restrictive inequality. A polynomial criterion is one of the form $pf(d)\leq 1$, where $f(d)$ is some polynomial in $d$. An exponential criterion is one of the form $pf(d) \leq 1$, where $f(d)$ is exponential in $d$.

\end{definition}

By directing each edge independently with probability $1/2$ in each direction, one can view Sinkless Orientation as an instance of the Distributed LLL that satisfies the exponential criterion $p2^{d} \leq 1$. In particular, this implies that the $\Omega(\log n)$ Sinkless Orientation lower bound directly implies an $\Omega(\log n)$ lower bound for the LLL under the exponential criterion $p2^{d} \leq 1$.

\section{Preliminaries}

The following is a well-known fact that follows from a straightforward simulation idea.

\begin{lemma}[Parnas-Ron reduction, \cite{PARNAS2007183}]
\label{lem:parnas_ron}
Any \local algorithm with a round complexity of $t(n)$ can be converted into an \lca /\volume algorithm with a probe complexity of $\Delta^{O(t(n))}$, where $\Delta$ denotes the maximum degree of the input graph.
\end{lemma}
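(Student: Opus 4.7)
The plan is to reduce the query-answering task to a local simulation after exploring a bounded ball around the queried vertex. Given a \local algorithm $A$ with round complexity $t(n)$, the key structural fact we will use is that the output of $A$ at a vertex $v$ depends only on the labelled, ID-equipped, port-numbered ball $B_G(v, t(n))$, because in $t(n)$ synchronous rounds no information from outside this ball can reach $v$. Therefore, to answer an \lca/\volume query at $v$, it suffices to reconstruct $B_G(v, t(n))$ using probes and then run $A$ on this ball in local computation (which the \lca/\volume model permits, since running time is free).

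The concrete procedure I would describe is a truncated breadth-first search from $v$. Starting with $\{v\}$ at level $0$, for each vertex $u$ discovered at level $i < t(n)$ we perform one probe for each of the at most $\Delta$ port numbers of $u$, each probe returning the ID, local input, and port information of the other endpoint; the set of all returned neighbors forms level $i+1$. After $t(n)$ such expansion steps we have the entire subgraph induced on the $t(n)$-ball, along with all IDs, input labels and port numberings, which is exactly the view that $A$ has at $v$. We then execute $A$'s deterministic transition function on this view (reading random bits from the shared tape in the randomized \lca case, or from the vertex-attached private randomness tape returned by each probe in the randomized \volume case) and return $v$'s output. Because the probes trace out a BFS tree, the explored region is connected, so the procedure is legal in the more restrictive \volume model as well.

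For the probe count, the BFS explores at most $\sum_{i=0}^{t(n)} \Delta^{i} = O(\Delta^{t(n)})$ vertices, and each vertex triggers at most $\Delta$ probes, giving a total of $\Delta^{O(t(n))}$ probes as claimed. Consistency across different queries is automatic: the output at every vertex is a fixed deterministic (or, in the randomized case, shared-random-bit-determined) function of the input graph, since it equals the output $A$ would have produced, and $A$ itself is a function of the (randomised) input alone.

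There is no serious obstacle here; the only mild subtlety is matching the randomness model. In the randomized \lca model the shared tape already plays the role of the random bits $A$ uses at every node. In the randomized \volume model the simulating query at $v$ only sees the private randomness of vertices inside $B_G(v, t(n))$, but since $A$'s output at $v$ depends only on the randomness of vertices in that same ball, this matches exactly. I would also remark that the statement is independent of whether IDs come from $[n]$ or from $\poly(n)$, since only the ability to distinguish neighbours is used by the reduction.
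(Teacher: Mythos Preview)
Your proof is correct and is exactly the standard simulation argument. The paper itself does not give a proof of this lemma; it merely records it as ``a well-known fact that follows from a straightforward simulation idea'' and cites \cite{PARNAS2007183}, and your BFS-based exploration followed by local simulation is precisely that idea.
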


The following lemma is a restatement of Theorem 3 in \cite{goos_nonlocal_probes_16}.

\begin{lemma}
\label{lem:rand_far_probes}
Suppose that there is a randomized \lca algorithm with probe complexity $t(n)$ that solves an LCL $\Pi$. Then there is a randomized \lca algorithm with probe complexity $t'(n) = t(\poly(n))$ that solves $\Pi$, does not perform any far probes, and works even if the unique identifiers come from a polynomial range (instead of from $[n]$). This also holds if we restrict ourselves to trees. 
\end{lemma}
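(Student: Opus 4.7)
The plan has two essentially independent components corresponding to the two guarantees claimed for $A'$: handling a polynomial ID range, and eliminating far probes. Let $A$ denote the original randomized \lca with probe complexity $t(n)$, expecting IDs in $[n]$, and let $[N]$ with $N = n^c$ for some constant $c$ denote the target polynomial ID range.

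For the ID-range extension, I would use a padding argument: given the actual $n$-vertex input graph $G$ with IDs drawn from $[N]$, consider the virtual graph $G^+$ obtained from $G$ by adding $N - n$ isolated phantom vertices occupying the unused IDs of $[N]$. Since $A$ is guaranteed to work on any $N$-vertex graph with IDs in $[N]$, its application to $G^+$ has probe complexity $t(N) = t(\poly(n))$ and yields a valid LCL solution on $G^+$ with probability $1 - 1/\poly(n)$. Because LCL validity is determined by $r$-neighborhoods and the phantoms are isolated, the restriction of the solution to $G$ is a valid LCL solution on $G$; moreover, $G^+$ is a forest whenever $G$ is, preserving the tree case.

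For eliminating far probes, I would follow the approach of~\cite{goos_nonlocal_probes_16}: use the shared random string to pre-commit to a \emph{virtual extension} of the input, and answer every far probe from this extension rather than from the real input graph. Concretely, $A'$ interprets the shared random string as (i) a uniformly random injection that relabels the real vertices of $G$ into $[N]$, and (ii) a family of pre-sampled bounded-degree virtual neighborhoods indexed by IDs in $[N]$. Whenever the simulated $A$ issues a far probe to an ID $i$ outside its current local exploration, $A'$ returns the pre-committed virtual neighborhood of $i$ rather than probing the real graph. Taking $N$ a sufficiently large polynomial in $n$, the probability over the random injection that any of the far probes triggered by the $n$ output queries collides with a real-vertex ID is $1/\poly(n)$; conditioned on no such collision, the entire simulation is a faithful execution of $A$ on a single fixed virtual graph $G^*$ that agrees with $G$ on all real vertices, so $A$'s correctness on $G^*$ transfers directly to the correctness of $A'$ on $G$.

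The main obstacle is enforcing \emph{global consistency} of the virtual responses across different output queries, so that the simulation corresponds to a single well-defined virtual graph $G^*$; otherwise one cannot invoke $A$'s high-probability correctness as a single event. I would address this by having $A'$ derive every virtual response deterministically from the shared random string via a fixed rule (essentially treating the shared string as a pseudorandom function on $[N]$), so that a given ID always receives the same virtual response regardless of which output query triggered the probe. A union bound over the two remaining failure events---$A$ failing on $G^*$, and some far probe hitting a real ID---still yields success probability $1 - 1/\poly(n)$, as required by \cref{def:lca}. The tree restriction needs no extra work: the pre-committed virtual neighborhoods can be chosen to be isolated vertices (or small trees), so that $G^*$ is a forest whenever $G$ is.
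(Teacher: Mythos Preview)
The paper does not give its own proof of this lemma: it is simply stated as a restatement of Theorem~3 in~\cite{goos_nonlocal_probes_16}, with no argument supplied. So there is nothing in the paper to compare your proposal against directly.

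That said, your outline is along the lines of the argument in~\cite{goos_nonlocal_probes_16}: pad the ID space with virtual (isolated) vertices, use the shared random string to randomly embed the real IDs into a sufficiently large $[N]$, and answer every far probe from the pre-committed virtual extension; a collision between a far probe and a real ID then becomes a low-probability bad event that can be union-bounded over all queries. Two points are worth tightening. First, the random injection should act on the \emph{ID space} rather than on ``the real vertices of $G$'': $A'$ cannot enumerate the real vertices without probing, so what you actually need is a random map $\sigma$ on $[\mathrm{poly}(n)]$ (or $[N]$) that $A'$ applies to every ID it encounters. Second, the adaptivity issue you flag is real but your one-line fix (derive virtual responses deterministically from the shared string) does not by itself give the probability bound; you still need that, conditioned on the $\sigma$-images already revealed by near probes, the $\sigma$-images of the \emph{remaining} real IDs are close to uniform in $[N]$ and independent of the randomness that drives the simulated $A$. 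This is typically arranged by splitting the shared random string into an ``algorithm'' part fed to $A$ and a separate ``relabeling'' part used only for $\sigma$. With these two clarifications your sketch matches the cited proof.
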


We use this lemma to extend a lower bound that we prove for the \volume model to \lca algorithms (that are allowed to perform far probes).

We also need to use the following lemma which informally states that far probes are of no use for deterministic \lca algorithms with a small probe complexity.
\begin{theorem}[Theorem 1 in \cite{goos_nonlocal_probes_16}]
\label{thm:goos}
Any LCL problem that can be solved in the \lca model deterministically with probe complexity $t(n)$ can be solved with a deterministic round complexity of $t'(n) = t(n^{\log n})$ in the \local model
provided $t(n) = o(\sqrt{\log n})$. 
\end{theorem}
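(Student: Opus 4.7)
My plan is to simulate the given deterministic \lca algorithm $\mathcal{A}$ in the \local model by treating each \local graph $G$ on $n$ vertices as a sparse \lca instance whose identifier universe is $[N]$ for $N := n^{\log n}$, so that $\mathcal{A}$'s probe budget becomes $t(N) = t(n^{\log n})$. The \local algorithm at each node $v$ first collects the $t(N)$-radius neighborhood of $v$ in $O(t(N))$ rounds, and then simulates $\mathcal{A}$ internally: any probe that falls inside the collected ball is answered directly from the gathered information, while any ``far'' probe to some identifier $x$ outside the ball is answered with the default response ``$x$ does not appear in the graph''.

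\textbf{Correctness via the probabilistic method.} Suppose for contradiction that on some \local input $G_0$ this simulation produces a labeling that violates the LCL in the $r$-neighborhood of some node $v_0$. I would then re-embed $G_0$ into $[N]$ by a uniformly random injection $\phi$ and argue that with positive probability the resulting ID-labeled graph $G^\star$ is \emph{collision-free}, meaning that for every query starting from any vertex of $G^\star$, every far probe that $\mathcal{A}$ performs targets an identifier outside $\phi(V(G_0))$. By revealing the random identifiers lazily, so that the target of each fresh far probe is independent of the identifiers it might collide with, a single far probe hits an occupied ID with probability at most $n/N$. A union bound over the $n$ query vertices, the $t(n)$ probes per query, and the $(N\Delta)^{O(t(n))}$ possible branches of $\mathcal{A}$'s decision tree yields a total failure probability of order $n^{O(t(n)^2)}/N$, which is $o(1)$ precisely when $t(n) = o(\sqrt{\log n})$ and $N = n^{\log n}$.

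\textbf{Closing the loop and main obstacle.} On any collision-free $G^\star$, the genuine \lca execution of $\mathcal{A}$ and the simulation coincide vertex-by-vertex, because every far probe really does miss the graph and so really does return ``not present''. Hence the simulation's labeling on $G_0$ and $\mathcal{A}$'s labeling on $G^\star$ agree on the underlying graph, and the local violation at $v_0$ transfers to $G^\star$ (the radius-$r$ neighborhoods being isomorphic), contradicting the correctness of $\mathcal{A}$. The delicate point, and the main obstacle, is making the union bound above rigorous: one must cover all $n$ query vertices together with the entire branching structure of $\mathcal{A}$'s decision tree, while handling the fact that probe targets can depend on previously revealed identifiers. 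Carefully ordering the random choices so that each fresh far probe is independent of the yet-unrevealed identifiers is what drives the argument through, and balancing $n^{O(t(n)^2)}$ against $N/n = n^{\log n - 1}$ is exactly where the $\sqrt{\log n}$ threshold comes from.
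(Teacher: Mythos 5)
The paper does not prove \cref{thm:goos}; it is cited directly from the Göös--Hirvonen--Levi--Medina--Suomela paper, so I will evaluate your proposal on its own merits.

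Your high-level plan (pad the identifier universe to $N = n^{\log n}$, gather the $t(N)$-radius ball, answer far probes with a default response) is a natural starting point, but the core correctness argument has a genuine gap that is not merely a matter of filling in details.

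\paragraph{The re-embedding step does not close the loop} You suppose the simulation fails on $G_0$ (with its given identifiers from $\{1,\dots,\poly(n)\}$), then re-embed into $[N]$ by a random injection $\phi$, obtaining $G^\star$, and conclude that ``the simulation's labeling on $G_0$ and $\mathcal{A}$'s labeling on $G^\star$ agree on the underlying graph.'' This is exactly where the argument breaks: a deterministic \lca algorithm is \emph{not} ID-invariant. The simulation's output on $G_0$ is computed using $G_0$'s original identifiers; $\mathcal{A}$'s output on $G^\star$ is computed using $\phi$-remapped identifiers. There is no reason for these two outputs to coincide, and so the alleged local violation at $v_0$ does not transfer from the simulation on $G_0$ to $\mathcal{A}$ on $G^\star$. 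The contradiction you are aiming for never materializes.

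\paragraph{The alternative reading does not work either} One might try to repair this by saying the \local algorithm bakes $\phi$ into its code (so that the simulation \emph{is} the run on $G^\star$). But then the existence of a good $\phi$ must be established \emph{uniformly over all inputs $G_0$}, since a \local algorithm must be a single deterministic algorithm, not one tailored to each instance. A union bound over all $n$-vertex bounded-degree graphs with polynomial-range IDs costs a factor $2^{\Theta(n\log n)}$, which would force $N = 2^{\Omega(n\log n)}$ rather than $n^{\log n}$, destroying the claimed parameters. Your proposal does not explain how to avoid this. Relatedly, the claimed factor ``$(N\Delta)^{O(t(n))}$ possible branches of $\mathcal{A}$'s decision tree'' and the resulting bound ``$n^{O(t(n)^2)}/N$'' are not derived: once the graph and $\phi$ are fixed, a deterministic $\mathcal{A}$ has no branching, and if the intent is to enumerate possible probe transcripts, the count is $N^{\Theta(t(N))} = n^{\Theta(t(N)\log n)}$, which for $t(n)=\omega(1)$ exceeds $N = n^{\log n}$. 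The arithmetic you wrote down is consistent with the $\sqrt{\log n}$ threshold, but the exponent $t(n)^2$ does not emerge from the union bound you describe.

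\paragraph{What is missing} The delicate part of Göös et al.\ is precisely how to avoid both of these union-bound blowups while keeping $N = n^{\log n}$, and you acknowledge this is ``the main obstacle'' without resolving it. You also do not address the consistency issue across query vertices: the \local outputs at $v_0$ and its $r$-neighbors $u$ are computed from different balls $B_u$, and to certify the LCL constraint at $v_0$ one needs a \emph{single} $N$-vertex instance $G^\star$ on which $\mathcal{A}$'s outputs at all of $v_0,u$ match the simulations simultaneously. Far probes issued from $u$ can land on identifiers of vertices that lie in $B_{u'}\setminus B_u$ for a neighboring $u'$, and then the ``not present'' default answer is inconsistent with what $G^\star$ must contain on behalf of $u'$. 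This is the real combinatorial difficulty, and the sketch does not engage with it.
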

For our purposes, we actually need a slightly stronger version of \cref{thm:goos} which is, in fact, what the proof of \cref{thm:goos} in \cite{goos_nonlocal_probes_16}  gives.

\begin{theorem}
\label{thm:goos_refined}
Any LCL problem that can be solved in the \lca model deterministically with probe complexity $t(n)$ can be solved deterministically with a probe complexity of $t'(n) = t(n^{\log n})$ in the \volume model provided $t(n) = o(\sqrt{\log n})$.
\end{theorem}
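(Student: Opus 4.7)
The plan is to follow the proof of Theorem~\ref{thm:goos} from~\cite{goos_nonlocal_probes_16} essentially verbatim, but to track carefully which probes the constructed simulation actually performs, thereby observing that it never probes a disconnected component and so is a legal \volume algorithm, not merely a \local one.

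Recall that the Göös--Hirvonen--Suomela argument for the $o(\sqrt{\log n})$ regime proceeds by indistinguishability and a union bound. Given a deterministic \lca algorithm $A$ with probe complexity $t(n)$, one considers running $A$ on a virtual graph whose vertex set carries unique identifiers drawn from a larger range of size $N = n^{\log n}$. The number of distinct probe transcripts $A$ can generate is at most $(N\Delta)^{O(t(N))}$, and for $t(n) = o(\sqrt{\log n})$ this quantity is small enough for a counting argument over all local topologies to go through. The counting argument produces an assignment of virtual IDs with the property that whenever $A$ issues a far probe---to an ID that has not yet been reached via a chain of neighbor probes from the query vertex $v$---the answer is already determined by, and consistent with, the connected region explored so far. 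Hence far probes can be replaced by predetermined ``dummy'' responses without changing $A$'s final output.

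The resulting algorithm $A'$ is the natural one: starting at the query vertex $v$, simulate $A$ step by step; for a probe to a neighbor of an already-explored vertex, issue the corresponding \volume probe; for a far probe, return the predetermined dummy without touching the input at all. First I would verify that the explored subgraph stays connected throughout, which is immediate since $A'$ only probes edges incident to already-known vertices. Next I would verify that the probe count is at most $t(N) = t(n^{\log n})$, matching the claimed bound, and that the ID-range conversion between \lca's $[n]$ and \volume's $\{1,\dots,\poly(n)\}$ introduces no blow-up beyond what is already absorbed into the $n^{\log n}$ factor. The main obstacle is conceptual rather than technical: one has to be convinced that the construction in~\cite{goos_nonlocal_probes_16}, as written, produces an algorithm whose probes are confined to a connected subgraph around $v$, and not merely one that happens to agree with a \local algorithm on a small ball around $v$. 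Once this is checked by inspection of their construction, the theorem follows without further work.
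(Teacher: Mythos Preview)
Your proposal is correct and matches the paper's approach exactly: the paper does not give an independent proof of \cref{thm:goos_refined} but simply observes that the proof of \cref{thm:goos} in \cite{goos_nonlocal_probes_16} already yields the \volume statement, which is precisely what you outline (with more detail than the paper itself provides).
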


\section{Warm-up: Speedup of Randomized Algorithms in \lca}
\label{subsec:volume_speedup}

In this section we prove \cref{thm:speedup} that we restate here for convenience. 

\speedup*

In fact, we show that the resulting deterministic \lca algorithm with a probe complexity of $O(\log^* n)$ can also be turned into a \volume algorithm with the same probe complexity.
The following lemma is proven with a similar argument as Theorem $3$ in \cite{chang2016exp_separation}. 

\begin{lemma}[Derandomization in \lca]
\label{lem:volume_derandomization}
If there exists a randomized \lca algorithm $\fA$ with a probe complexity of $t(n) = o(\sqrt{\log n})$ for a given LCL $\Pi$, then there also exists a deterministic \volume algorithm $\fA'$ for $\Pi$ with a probe complexity of $o(n)$ and  where the identifiers are from $[2^{O(n)}]$. 
\end{lemma}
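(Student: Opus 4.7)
The plan is to adapt the Chang-Kopelowitz-Pettie derandomization (Theorem~3 of \cite{chang2016exp_separation}) to the \lca setting, being careful about the identifier range. First I would apply \cref{lem:rand_far_probes} to the given randomized \lca algorithm $\fA$, obtaining a randomized algorithm $\fA_1$ that performs no far probes, handles polynomial-range identifiers, and still has probe complexity $t_1(n) = t(\poly(n)) = o(\sqrt{\log n})$. Since the success-probability constant $c$ in the \lca guarantee can be made arbitrarily large at the cost of only a constant factor in $t_1$, I fix $c$ to a sufficiently large constant to be determined later.

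Next, given an $n$-vertex input graph $G$ carrying identifiers from $[2^{Cn}]$, I set $N := 2^{\gamma n^{2}}$ for a constant $\gamma$ to be fixed, and view $G$ as embedded inside a padded graph of size $N$ by adding $N-n$ isolated vertices with arbitrary fresh identifiers. For every sufficiently large $\gamma$ the identifier range $[2^{Cn}]$ fits inside $[\poly(N)]$. Because $\fA_1$ performs no far probes, its behavior on queries to vertices of $G$ is unaffected by the padding, and the per-query probe complexity on this virtual size-$N$ input is $t_1(N) = o(\sqrt{\log N}) = o(n)$, as desired.

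Finally I would perform a union-bound derandomization. For any fixed random string $r$, the \lca correctness guarantee applied at virtual size $N$ says that $\fA_1(r)$ fails on the padded version of $G$ with probability at most $1/N^{c}$ over $r$. The number of distinct $n$-vertex bounded-degree (port-numbered) graphs with identifiers from $[2^{Cn}]$ is at most $2^{O(n^{2})}$: there are at most $\binom{2^{Cn}}{n} \le 2^{Cn^{2}}$ ways to choose the identifier set, and for each such choice at most $(2^{Cn})^{\Delta n}$ ways to specify the ordered port-neighborhoods of all $n$ vertices. Picking $\gamma$ so that $c\gamma$ dominates the hidden constant in this $O(n^{2})$, a union bound yields a single random string $r^{*}$ that succeeds on every admissible input simultaneously; hard-wiring $r^{*}$ into $\fA_1$ gives the claimed deterministic \volume algorithm $\fA'$.

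The main obstacle is the balancing act in the choice of $N$: it must be large enough that $1/N^{c}$ beats the $2^{O(n^{2})}$ count of graphs in the union bound, yet small enough that $t_1(N) = o(\sqrt{\log N})$ stays $o(n)$. Setting $\log N = \Theta(n^{2})$ with the right constant relative to $c$ satisfies both, and this is precisely why the hypothesis of the lemma asks for probe complexity below $\sqrt{\log n}$ rather than merely below $\log n$.
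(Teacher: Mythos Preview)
Your proposal is correct and follows essentially the same Chang--Kopelowitz--Pettie derandomization as the paper: apply \cref{lem:rand_far_probes}, blow up the virtual instance size to $N = 2^{\Theta(n^2)}$ so that the $2^{O(n^2)}$ labelled bounded-degree graphs on $n$ vertices can be union-bounded against the $1/\poly(N)$ failure probability, and observe that $t(N)=o(\sqrt{\log N})=o(n)$. The only cosmetic difference is that the paper first replaces identifiers by private random bits and then derandomizes via a function $\rho$ from exponential IDs to bit strings (plus the shared tape $\rho^*$), whereas you keep the identifiers as-is inside $[\poly(N)]$ and derandomize only the shared tape; both routes yield the same deterministic \volume algorithm.
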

\begin{proof}
Let $r$ be the local checkability radius of $\Pi$. First, we apply \cref{lem:rand_far_probes} to convert the algorithm $\fA$ into one having the same asymptotic complexity but which does not use any far probes and assumes only identifiers from a polynomial range. 
Without loss of generality, we can assume that $\mathcal{A}$ works in a setting where, instead of being assigned an identifier, each node has access to a private random bit string (in other words, $\mathcal A$ works in the \volume model with the addition of shared randomness). This is justified as, with access to private randomness in each node, the algorithm can generate unique identifiers with probability $1 - 1/\poly(n)$, as the first $O(\log n)$ random bits of each node are unique with probability $1 - 1/\poly(n)$. 

Next, let $\fG_n$ denote the set of all $n$-node graphs (up to isomorphism) of maximum degree $\Delta$ with each vertex labeled with a unique identifier from $[2^{O(n)}]$ and where each vertex has an input label from the finite set of input labels from the given LCL. The number of unlabeled graphs of maximum degree at most $\Delta$ is $2^{O(n \log n)}$ as it can be described by $O(n \cdot \Delta \cdot \log n)$ bits of information. 
Next, for a fixed $n$-node graph, the number of its labelings with identifiers from $[2^{O(n)}]$ is upper bounded by $(2^{O(n)})^n = 2^{O(n^2)}$ and the number of distinct input label assignments is upper bounded by $2^{O(n)}$.
Hence, the number of (labeled) graphs in $\fG_n$ is strictly smaller than some suitably chosen $N = 2^{O(n \log n)} \cdot 2^{O(n^2)} \cdot 2^{O(n)} = 2^{O(n^2)}$.

Now, let $\rho : [2^{O(n)}] \rightarrow \{0,1\}^\mathbb{N}$ be a function that maps each ID to a stream of bits, chosen uniformly at random from the space of all such functions.
Similarly, let $\rho^*$ be a bit string chosen uniformly at random from $\{0,1\}^\mathbb{N}$.
Consider the algorithm $\mathcal{A}_{\rho, \rho^*}$ solving $\Pi$ on all graphs $G \in \fG_n$ that is defined as follows.
First, $\mathcal{A}_{\rho, \rho^*}$ (internally) maps each identifier $I_v$ it sees (at some node $v$) in $G$ to a bit string by applying the function $\rho$, and then it simulates algorithm $\mathcal{A}$, where the private random bit string that $\mathcal A$ has access to in each node $v$ is given by $\rho(I_v)$, the shared random bit string is given by $\rho^*$, and the input parameter given to $\mathcal A$ describing the number of nodes is set to $N$.
Equivalently, this can be seen as running $\mathcal A$ (with randomness provided by $\rho$ and $\rho^*$) on the graph $H$ obtained from $G$ by adding $N - n$ isolated nodes (where we are only interested in the output of $\mathcal A$ on the nodes of $H$ that correspond to nodes in $G$).

Since, on $N$-node graphs, $\mathcal A$ provides a correct output with probability at least $1 - 1/N$, and $\rho$ and $\rho^*$ are chosen uniformly at random, we see that, for every $G \in \fG_n$, the probability that $\mathcal{A}_{\rho, \rho^*}$ fails on $G$ is at most $1/N$.
Since the number of graphs in $\fG_n$ is strictly smaller than $N$, it follows that there are a function $\rho_{\dett} : [2^{O(n)}] \rightarrow \{0,1\}^\mathbb{N}$ and a bit string $\rho^*_{\dett} \in \{0,1\}^\mathbb{N}$ such that $\mathcal{A}_{\rho_{\dett}, \rho^*_{\dett}}$ does not fail on any graph in $\fG_n$.

As the runtime of $\mathcal{A}_{\rho_{\dett}, \rho^*_{\dett}}$ is equal to the runtime of $\mathcal A$ on $N$-node graphs, we can conlude that $\mathcal{A}_{\rho_{\dett}, \rho^*_{\dett}}$ is a deterministic \volume algorithm with probe complexity $t(N) = o(\sqrt{\log N}) = o(\sqrt{\log 2^{O(n^2)}}) = o(n)$, as desired.
\end{proof}

Similarly, the next lemma is a variant of Theorem 6 in \cite{chang2016exp_separation} and the discussion in Section 7.1 in \cite{RosenbaumSuomela2020volume_model}. 
\begin{lemma}[Speedup in \volume with exponential identifiers]
\label{lem:exponential_id_speedup}
If there exists a deterministic \volume algorithm $\fA$ for an LCL $\Pi = (\Sigma_{in}, \Sigma_{out}, r, \fP)$ with a probe complexity of $o(n)$ that works with unique identifiers from $[2^{O(n)}]$, then there also is a deterministic \volume algorithm $\fA'$ for $\Pi$ with a probe complexity of $O(\log^* n)$. 
\end{lemma}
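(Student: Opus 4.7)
The plan is to adapt the Chang--Kopelowitz--Pettie speedup (Theorem~6 in \cite{chang2016exp_separation}) to the \volume model. The idea is first to compute a distance-$K$ coloring by iterating Linial's color reduction (costing $O(\log^* n)$ probes when $K$ is constant), and then to simulate $\fA$ on a ``virtual'' $m$-node input that is explicitly built from the portion of $G$ relevant to the LCL check around the queried vertex. Let $r$ be the checkability radius of $\Pi$, $\Delta$ the maximum degree, and $c$ the constant for which $\fA$'s identifiers come from $[2^{cn}]$. Using $t(n) = o(n)$, I would pick a constant $m$ so large that $\Delta^{2r+2}(t(m)+1) \leq m$, set $K := 4(t(m)+r+1)$, and let $C = O(\Delta^{2K})$; enlarging $m$ further if necessary, we may also assume $C \leq 2^{cm}$.

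The new algorithm $\fA'$ then has two phases. First, compute a distance-$K$ coloring $\chi \colon V \to [C]$ using Linial's algorithm, in $O(\log^* n)$ probes. Second, on a query at a node $v$, simulate $\fA$ starting at $v$ for at most $t(m) = O(1)$ further probes, pretending the input size is $m$ and using the color $\chi(u) \in [C] \subseteq [2^{cm}]$ as the virtual identifier of every probed node $u$. The total probe cost is $O(\log^* n)$, as required.

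For correctness, fix a query vertex $v$, let $P$ be the union over $u \in B_G(v,r)$ of the nodes that the simulation of $\fA$ started at $u$ would probe, and set $R := B_G(v, 2r) \cup P$. Then $|R| \leq \Delta^{2r+1} + \Delta^{r+1}(t(m)+1)$, so $(\Delta+1)|R| \leq m$, and the choice of $K$ ensures that all virtual IDs inside $R$ are pairwise distinct. I would build a valid $m$-node input $H^\star$ of $\fA$ by taking the subgraph of $G$ induced on $R$ with virtual IDs $\chi$, then attaching to each $u \in R$ and each port of $u$ whose $G$-neighbor lies outside $R$ a fresh leaf-dummy at that port so that the degree and port numbering of every $u \in R$ in $H^\star$ match those in $G$, and finally padding with isolated dummies until $H^\star$ has exactly $m$ vertices, with all dummies receiving fresh distinct IDs from $[2^{cm}]$. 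A direct induction on the simulation step then shows that each probe issued by $\fA$ at a node $u \in B_G(v, r)$ receives exactly the same answer in $G$ and in $H^\star$, including all local information attached to the returned node; in particular the dummies are never actually reached. Because $B_G(v, 2r) \subseteq R$, no dummy lies within distance $r$ of any $u \in B_G(v, r)$, hence $B_{H^\star}(u, r) = B_G(u, r)$ as labelled balls, and the LCL constraint at $v$ in $G$ is inherited from the correctness of $\fA$ on $H^\star$.

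The main technical obstacle is the construction of $H^\star$: one has to simultaneously (i) embed both the portion that $\fA$ can probe and the $2r$-ball around $v$ into a legal $m$-node input of $\fA$, (ii) make sure that the simulation on $H^\star$ matches the real one probe-by-probe, and (iii) keep all dummies outside the $r$-balls of nodes in $B_G(v, r)$ so that the LCL constraint in $H^\star$ at $v$ really coincides with the one in $G$. The exponential ID range is essential both to let Linial colors serve as virtual identifiers and to supply enough fresh distinct IDs for the dummies, while the hypothesis $t(n) = o(n)$ is used only through the existence of such a constant $m$.
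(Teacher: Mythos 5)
Your proof is correct and follows essentially the same strategy as the paper's: compute a constant-distance coloring with a constant number of colors in $O(\log^* n)$ probes, reuse the colors as virtual identifiers, simulate $\fA$ while pretending the graph has a large constant number of nodes, and argue correctness by embedding the probed region (plus the local-checkability ball) into a legitimate constant-size input on which $\fA$ must succeed. Your write-up is somewhat more explicit than the paper about the choice of constants, the precise distance parameter $K$, and the construction of the padded graph $H^\star$, but the underlying argument is the same.
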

\begin{proof}
Let $n_0$ be a big enough constant. The algorithm $\fA'$ does the following on a given input graph $G$ with maximum degree $\Delta$: first, it uses the algorithm of Even et al. \cite{even2014deterministic} to construct a coloring of the power graph $G^{n_0+r}$---the graph with the vertex set of $G$ and where two nodes are connected by an edge iff they have a distance of at most $n_0 + t$ in the graph $G$---with $\Delta^{n_0+r} + 1 = 2^{O(n_0)}$ colors with a probe complexity of $O(\log^* n)$. Then we interpret those colors as identifiers and run the algorithm $\fA$ on $G$ but we tell the algorithm that the input graph has $n_0$ nodes instead of $n$. The query complexity of $\fA'$ is $O(\log^* n \cdot o(n_0)) = O(\log^* n)$. 

To see that $\fA'$ produces a valid output, note that if $\fA'$ fails at a node $u$, we may consider all the nodes that were needed for the simulation of $\fA$ in the $r$-hop neighborhood of $u$ together with their neighbors---the number of such nodes is bounded by $\Delta \cdot \Delta^r\cdot o(n_0) < n_0$, by choosing $n_0$ large enough. But as all those nodes are labeled by unique identifiers from $[2^{O(n_0)}]$, we can get a graph of size $n_0$ on which the original algorithm $\fA$ fails, a contradiction. 
\end{proof}
\cref{thm:speedup} now follows from \cref{lem:volume_derandomization,lem:exponential_id_speedup}. 

We remark that by using polynomial instead of exponential identifiers in \cref{lem:volume_derandomization}, we would get that a randomized \lca /\volume algorithm of complexity $t(n) = o(\log n /\log \log n)$ can be derandomized to a deterministic algorithm with complexity $t(2^{O(n\log n)}) = o(n)$. The term $2^{O(n\log n)}$ comes from a union bound over all $n$-node graphs of maximum degree $\Delta$ labeled with polynomial-sized unique identifiers. This is the reason for the segment between the complexity pairs $[\log n, \log\log n]$
and $[n, \log n / \log\log n]$ in \cref{fig:volume_big_picture} (cf. Section 1.2 and Figure 2 in \cite{RosenbaumSuomela2020volume_model} that does not differentiate between $\Theta(\log n)$ and $\Theta(\log n / \log \log n)$).

\section{The $\Omega(\log n)$ \volume Lower Bound for Sinkless Orientation}
\label{subsec:id_graph}

In this section, we prove the lower bound of \cref{thm:lll}. We prove it by providing the respective lower bound for the Sinkless Orientation problem (\cref{def:sinkless_orientation}) on trees.

\begin{theorem}
\label{thm:so_lower}
There is no randomized LCA algorithm with probe complexity $o(\log n)$ for the problem of Sinkless Orientation or $\Delta$-coloring, even if the input graph is a tree with a precomputed $\Delta$-edge coloring. 
In particular, the \lca complexity of LLL is $\Omega(\log n)$, even in the regime $p \le 2^{-\Delta}$.  
\end{theorem}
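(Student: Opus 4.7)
The plan is an argument by contradiction using a sharpened derandomization with the ID graph technique, followed by round elimination in the resulting deterministic model. Suppose $\mathcal{A}$ were a randomized LCA algorithm for Sinkless Orientation on bounded-degree trees (with precomputed $\Delta$-edge coloring) achieving probe complexity $t(n) = o(\log n)$. By \cref{lem:rand_far_probes}, I may assume $\mathcal{A}$ makes no far probes and works with identifiers from a polynomial range, or equivalently with per-node private randomness.

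The key construction is an ID graph $H$: a graph on vertex set $[N]$ with $N = 2^{\Theta(n)}$ and constant maximum degree $\Delta_H$, chosen rich enough that every bounded-degree $n$-vertex tree admits at least one assignment of $[N]$-identifiers to its vertices in which tree-adjacent nodes receive $H$-adjacent identifiers. Concretely, $H$ may be taken as an explicit constant-degree graph of girth larger than $n$ into which every bounded-degree $n$-vertex tree embeds. With the ID assignment restricted in this way, the admissible inputs can be counted: there are $2^{O(n)}$ unlabeled bounded-degree trees on $n$ vertices; once a root is fixed, its ID has $N = 2^{O(n)}$ options and each subsequent vertex's ID is constrained to the $\Delta_H = O(1)$ $H$-neighbors of its parent's ID, contributing $\Delta_H^{n-1} = 2^{O(n)}$ completions; the $\Delta^n = 2^{O(n)}$ possible precomputed edge colorings multiply in, for $2^{O(n)}$ admissible inputs in total.

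I now derandomize as in \cref{lem:volume_derandomization}: feed $\mathcal{A}$ the parameter $N$ in place of $n$, so that $\mathcal{A}$ fails on each admissible input with probability at most $N^{-c}$, and take a union bound over the $2^{O(n)} \ll N^c$ inputs to obtain a single random seed that makes $\mathcal{A}$ correct on all of them. The result is a deterministic volume algorithm $\mathcal{A}'$ with probe complexity $t(N) = o(\log N) = o(n)$ that solves Sinkless Orientation on every $n$-node bounded-degree tree with $[N]$-identifiers respecting $H$ and the given edge coloring. To close, I adapt the round elimination lower bound of Brandt et al.\ for Sinkless Orientation to this deterministic, ID-graph-constrained setting. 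The classical inductive step takes a $T$-probe solver and yields a $(T-1)$-probe solver for the standard relaxation in which each half-edge outputs a nonempty subset of $\{\text{in}, \text{out}\}$; because the ID graph already restricts the labeled neighborhoods the algorithm can distinguish, the indistinguishability argument driving the step goes through purely deterministically, with no probabilistic lifting needed, which is precisely the authors' remark that the proof is simpler than the randomized original. Iterating the step until the base case becomes infeasible on sufficiently deep trees contradicts the existence of $\mathcal{A}'$.

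The main obstacle is the adaptation of the round elimination step to the ID-graph setting: one must verify that the neighborhood-pinning constructions used in the inductive step stay inside the ID-graph-respecting input family, that the base-case infeasibility survives the added ID-graph constraint, and that the depth at which infeasibility arises is compatible with $\mathcal{A}'$'s $o(n)$ probe bound on $n$-vertex trees. The LLL corollary is then immediate: view Sinkless Orientation as an LLL instance by orienting each edge independently and uniformly and letting $\mathcal{E}_v$ denote the event that $v$ is a sink; then $\Pr[\mathcal{E}_v] \leq 2^{-\Delta}$ and each $\mathcal{E}_v$ shares variables with at most $\Delta$ other events, placing the problem in the $p \leq 2^{-\Delta}$ regime claimed by the theorem.
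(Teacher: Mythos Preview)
Your high-level strategy matches the paper's: derandomize via an ID graph so that only $2^{O(n)}$ admissible labeled trees remain, obtain a deterministic $o(n)$-probe algorithm relative to the ID graph, and finish with round elimination in that restricted setting. Two concrete gaps remain, however.

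First, you attempt to run round elimination directly on probe complexity (``a $T$-probe solver yields a $(T-1)$-probe solver''). Round elimination acts on \local round complexity; its inductive step relies on the output at a node being determined by a ball of fixed radius, which is not what an adaptive \volume algorithm provides. The paper inserts an explicit conversion step (\cref{lem:so_speedup}) that turns the $o(n)$-probe \volume algorithm into a \local algorithm running in fewer than $n^*$ rounds relative to a single fixed ID graph $H(n^*,\Delta)$, and only then applies round elimination. Without this intermediate step your induction does not type-check.

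Second, your ID graph is underspecified for the base case. You take $H$ to be a constant-degree graph of girth larger than $n$; that suffices for the $2^{O(n)}$ counting, but not for the contradiction at zero rounds. The paper's ID graph is a family $H_1,\dots,H_\Delta$ indexed by the edge colors of the input tree, with the additional property (item~5 of \cref{def:id_graph}) that no $H_i$ has an independent set of size $|V(H)|/\Delta$. The $0$-round base case argues that the algorithm's choice of outgoing-edge color is a function of the ID alone, so by pigeonhole some color $c$ is chosen on a $1/\Delta$ fraction of $V(H)$; the independence bound then forces two $H_c$-adjacent IDs to both pick $c$, yielding an inconsistent edge. A generic high-girth constant-degree graph need not have small independence number, so with your $H$ as stated the base case would not close.
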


The general idea of the proof is the same as in \cref{subsec:volume_speedup}: we want to derandomize the assumed $o(\log n)$-probe randomized algorithm for Sinkless Orientation to a deterministic $o( n)$-probe \volume algorithm, this time restricting what constitutes a valid ID assignment. 
We then reduce the problem of showing that such a \volume algorithm does not exist to the problem of showing that there does not exist a nontrivial deterministic \local algorithm for Sinkless Orientation. We will explain later in more detail what we mean by nontrivial. Finally, we conclude the proof by showing that such a \local algorithm does not exist.

As we saw in \cref{subsec:volume_speedup}, a direct application of the derandomization by Chang-Kopelowitz-Pettie allows us only to argue that an $o(\sqrt{\log n})$-probe algorithm leads to an $o(n)$-probe deterministic algorithm, so we need to do better. 
There are two obstacles to obtaining a union bound over $2^{O(n)}$ graphs in \cref{lem:volume_derandomization}:
\begin{enumerate}
    \item the number of $n$-node graphs of maximum degree $\Delta$ is bounded only by $2^{O(n \log n)}$,
    \item the number of ways of labeling $n$ objects with labels from $[2^{O(n)}]$ is $2^{O(n^2)}$. 
\end{enumerate}

To get an $\Omega(\log n)$ lower bound, both terms need to be improved to $2^{O(n)}$. 

This is easy with the first term -- in fact, the whole lower bound works even if we restrict ourselves to trees. The number of trees with maximum degree $\Delta$ can easily be upper bounded by $\Delta^{O(n)}$ and an even stronger upper bound of $2.96^n$ is known.

The issue is with the second bullet point---the number of labelings of $n$ objects with labels from range $2^{O(n)}$ clearly cannot be improved from $2^{O(n^2)}$.

To decrease the number of possibilities, we need to restrict our space of labelings of a tree with unique identifiers in such a way that the number of possibilities drops to $2^{O(n)}$, yet this restriction should not make it easier to solve Sinkless Orientation in the \volume model. This is done by the ID graph technique developed in a parallel paper~\cite{brandt_chang_grebik_grunau_rozhon_vidnyaszky2021LCLs_on_trees_descriptive} for a different purpose. 
An ID graph $H$ is a graph that states which pairs of identifiers are allowed for a pair of neighboring nodes of the input tree. In our case, one should think about it as a high-girth high chromatic number graph on $2^{O(n)}$ nodes with each node representing an identifier. The  girth of the graphs is at least $\Theta(n)$ and its chromatic number at least $\Delta$ (i.e., the maximum degree of the \emph{input graph}). Our definition is a little subtler due to the fact that we work on edge-colored trees where arguments are usually easier. 

\paragraph{ID graph}
We now define the ID graph, prove that the number of labelings of $n$-node trees consistent with it is bounded by $2^{O(n)}$ in \cref{lem:distinct_trees}, and then prove \cref{lem:so_rand_to_det}.
Each vertex of the ID graph can be considered as an identifier that will later be used to provide IDs to the considered input graph.

\begin{definition}[ID graph]
\label{def:id_graph}
Let $R$ and $\Delta$ be positive integers.
An ID graph $H = H(R, \Delta)$ is a collection of graphs $H_1, H_2, \dots, H_\Delta$ such that the following hold:
\begin{enumerate}
    \item For all $i,j$ satisfying $1 \le i, j \le \Delta : V(H_i) = V(H_j)$; we use $V(H)$ to denote the set of vertices in $H$, that is, $V(H) = V(H_1)$, 
    \item $|V(H)| = \Delta^{10R}$,
    \item $\forall v \in V(H), \forall 1 \leq i \leq \Delta : 1 \le \deg_{H_i}(v) \le \Delta^{10}$,
    \item $\girth(H) \ge 10R$,
    \item Any independent set of $H_i$ has less than $|V(H)| / \Delta$ vertices. 
\end{enumerate}
\end{definition}

\begin{lemma}[ID graph existence]
\label{lem:id_graph_existence}
There exists an ID graph $H = H(R, \Delta)$ for all sufficiently large $R,\Delta > 0$. 
\end{lemma}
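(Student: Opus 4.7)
\emph{Overall strategy.} The plan is to use the probabilistic method together with Erdős's classical deletion trick. I will sample the $\Delta$ graphs independently from an Erdős--Rényi distribution on a common vertex set of size slightly larger than $\Delta^{10R}$, verify the five defining conditions in expectation, and then delete a small number of vertices (and patch up a small number of low-degree vertices) to obtain a valid ID graph.

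\emph{Construction and verification.} Fix $N = (1 + o(1))\Delta^{10R}$ and an edge probability $p = p(R, \Delta)$ to be chosen, and sample $H_1, \ldots, H_\Delta$ as independent copies of $G(N, p)$ on a common vertex set $V$ of size $N$. Standard concentration bounds for Erdős--Rényi random graphs give, for each $i$, that with high probability (i) every vertex $v$ satisfies $\deg_{H_i}(v) = O(Np + \log N) \le \Delta^{10}$ when $Np = o(\Delta^{10})$, (ii) the independence number satisfies $\alpha(H_i) \le O(\log N / p) < N/\Delta$ when $p \gtrsim \Delta \log N / N$, and (iii) the minimum degree is at least $1$ when $Np \gg \log N$. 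At the same time, the expected number of short cycles in the union multigraph $H^{*} = H_1 \sqcup \cdots \sqcup H_\Delta$ (including length-$2$ ``cycles'' that correspond to parallel edges across distinct $H_i$'s) is at most $\sum_{\ell = 2}^{10R - 1} (N \Delta p)^{\ell} / (2\ell)$, which is $o(N)$ once $(N\Delta p)^{10R - 1} = o(N)$. A union bound over $i \in [\Delta]$ combined with Markov's inequality on the short-cycle count produces, with positive probability, a realization in which all per-$H_i$ conditions hold and $H^{*}$ has at most $o(N)$ short cycles.

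\emph{Deletion and clean-up.} Starting from such a realization, delete one vertex from each short cycle of $H^{*}$, removing $o(N)$ vertices; the restricted union then has girth at least $10R$, and both $\alpha(H_i)$ and $\Delta_{\max}(H_i)$ can only improve under restriction. Further deletions reduce $|V|$ to exactly $\Delta^{10R}$. Finally, any vertex $v$ with $\deg_{H_i}(v) = 0$ in some $H_i$ after restriction can be reconnected by adding a single edge from $v$ to some $w$ lying at union-distance at least $10R$ from $v$; such a $w$ exists because the $(10R)$-hop ball around $v$ in the bounded-degree union contains only $\Delta^{O(R)} \ll N$ vertices, and the added edge cannot close any new cycle of length less than $10R$.

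\emph{Main obstacle.} The delicate step is choosing $p$ so that both of the competing conditions hold simultaneously: the independence-number bound forces $p \gtrsim \Delta \log N / N$, whereas the union-girth bound forces $(N \Delta p)^{10R - 1} = o(N)$, i.e.\ $p \lesssim N^{-1 + 1/(10R - 1)} / \Delta$. Showing that this window is nonempty for $N = \Theta(\Delta^{10R})$ will be the crux of the argument, and it is the reason the lemma is stated only for sufficiently large $R$ and $\Delta$ rather than for all positive values.
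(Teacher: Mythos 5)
Your outline --- sample $\Delta$ independent Erd\H{o}s--R\'enyi graphs on a common vertex set, verify the five conditions in expectation, then clean up by deleting short-cycle vertices and patching degree-$0$ vertices with fresh long edges --- is the same as the paper's. The step you flag as the crux and leave unresolved, however, is precisely where your parameters fail. Write $d = Np$ for the average degree of each $H_i$, so the union multigraph $H$ has average degree $\approx \Delta d$. For the deletion step to remove only $o(N)$ vertices you need the expected number of cycles of length $< 10R$ in $H$ to be $o(N)$; this expectation is of order $\sum_{\ell < 10R}(\Delta d)^\ell \approx (\Delta d)^{10R}$, and with $N = \Theta(\Delta^{10R})$ this forces $\Delta d \lesssim \Delta$, i.e.\ $d \lesssim 1$. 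But to get $\alpha(H_i) < N/\Delta$ you need $d$ to grow with $\Delta$: the sharp sparse-regime bound is $\alpha(G(N, d/N)) \approx (2N/d)\log d$ --- note $\log d$, not $\log N$; the bound $\alpha \le O(\log N / p)$ you quote is a valid but loose overestimate --- so you need $d \gtrsim \Delta\log\Delta$. These are incompatible at $N = \Theta(\Delta^{10R})$, so your ``window'' for $p$ is empty and no choice of $p$ can save the construction.

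The paper resolves the tension by taking the vertex set much larger. Its proof sets $n := \Delta^{1000R}$ (the ``$|V(H)| = \Delta^{10R}$'' in the stated definition conflicts with this and appears to be an editing slip; a commented-out earlier version of the definition has $\Delta^{1000R}$, which is what the proof needs) and takes $p = \Delta^2/n$, i.e.\ $d = \Delta^2$. Then the expected short-cycle count is $\sum_{j < 10R}(\Delta^3)^j \le \Delta^{O(R)} \ll n$, so deletion removes only $o(n)$ vertices, and the first-moment bound for independent sets of size $\lceil n/\Delta^{3/2}\rceil$ evaluates to $(e\Delta^{3/2})^{n/\Delta^{3/2}}\, e^{-\Theta(n/\Delta)}$, which vanishes exactly when $\log\Delta \ll \sqrt{\Delta}$ --- and that, not a tight window for $p$, is the actual role of the ``sufficiently large $\Delta$'' hypothesis. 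With $n = \Delta^{cR}$ for a large enough constant $c$ and $p = \Delta^2/n$, the remainder of your clean-up (delete short-cycle vertices, then iteratively add one fresh edge of length $\ge 10R$ for each vertex that is isolated in some $H_i$, checking that target degrees stay below $\Delta^{10}$) goes through essentially as you wrote it.
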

This lemma is proved in a parallel paper~\cite{brandt_chang_grebik_grunau_rozhon_vidnyaszky2021LCLs_on_trees_descriptive} developing the technique for a different purpose. For completeness we also leave a proof to \cref{sec:appendix}.

As there are possibly many ID graphs that satisfy the given conditions, from now on, whenever we write $H(R, \Delta)$, we mean the lexicographically smallest ID graph $H(R, \Delta)$. 

\begin{definition}[Proper $H$-labeling of a $\Delta$-edge-colored tree]
Let $T$ be a tree having a maximum degree of at most $\Delta$ and whose edges are properly colored with colors from $[\Delta]$. A proper $H$-labeling of $T$ with an ID graph $H$ is a labeling $h : V(T) \rightarrow V(H)$ of each vertex $u \in V(T)$ with a vertex $h(u) \in V(H)$ such that whenever $u,v \in V(T)$ are incident to a common edge colored with color $c \in [\Delta]$, then $h(u)$ and $h(v)$ are neighboring in $H_c$. 
\end{definition}

\begin{definition}[Solving an LCL relative to $H$]
\label{def:relative_to_H}

When we say that a deterministic \volume or \local algorithm for an LCL $\Pi$ works \emph{relative to an ID graph $H$}, we mean that the algorithm works if the unique node identifiers are replaced by a proper $H(n, \Delta)$-labeling for instances of size $n$ and with maximum degree at most $\Delta$. 

We also say that an algorithm works \emph{relative to ID graphs $\{H_n\}_{n \in \mathbb{N}}$} if it works for every $n$-node input graph that is $H$-labeled by $H_n$. 
\end{definition}

\begin{observation}
If a deterministic local algorithm $\fA$ solves a problem $\Pi$ of checkability radius $t$ in $r$ rounds (in case of the LOCAL model) or with $r$ probes (in case of the volume model) with $\Delta^{O(r)}$-sized identifiers, then it also solves $\Pi$ relative to $H(t+r, \Delta)$. 
\end{observation}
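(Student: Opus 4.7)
The plan is to show that whenever $\fA$ is given a properly $H(t+r,\Delta)$-labeled input tree, it still produces a valid solution to $\Pi$. The underlying idea is that although $H$-labels may repeat across the whole tree, the girth condition forces them to be distinct within any $(t+r)$-ball, so the local view presented to $\fA$ is indistinguishable from one arising from a globally unique identifier assignment.

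First I would establish local uniqueness: in a $\Delta$-edge-colored tree $T$ equipped with a proper $H(t+r,\Delta)$-labeling $h$, the restriction $h|_{B_T(v,t+r)}$ is injective for every node $v$. Take two distinct nodes $u_1, u_2 \in B_T(v,t+r)$ and let $u_1=w_0,w_1,\ldots,w_k=u_2$ be the unique tree path, so that $k\le 2(t+r)$. Writing $c_i$ for the color of the edge $\{w_{i-1},w_i\}$, the proper edge coloring of $T$ gives $c_i\ne c_{i+1}$. By the definition of a proper $H$-labeling, $(h(w_{i-1}),h(w_i))$ is an edge of $H_{c_i}$, so $h(w_0),\ldots,h(w_k)$ is a walk in the multigraph $H=H_1\cup\cdots\cup H_\Delta$ whose $i$-th edge lies in layer $H_{c_i}$. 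Because consecutive edges belong to different layers they are distinct edges of $H$, hence the walk is non-backtracking. If $h(u_1)=h(u_2)$, this would be a non-backtracking closed walk of positive length at most $2(t+r)<10(t+r)\le\girth(H)$, contradicting property 4 of \cref{def:id_graph}.

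Next I would transfer this to correctness of $\fA$. Suppose for contradiction that $\fA$ on some properly $H$-labeled $n$-node tree $T$ outputs an assignment that violates $\fP$ at some $B_T(v,t)$. I construct an auxiliary $n$-node tree $T^*$ that is isomorphic to $T$ as a graph but carries a globally unique labeling from $[\Delta^{O(r)}]$: keep the $H$-labels on $B_T(v,t+r)$ (they are injective by the previous step) and reassign the labels of the remaining vertices to fresh mutually distinct values drawn from the $\Delta^{O(r)}$-sized ID range, avoiding the labels used inside the ball. This is possible because $|V(H)|=\Delta^{10(t+r)}=\Delta^{O(r)}$ for constant $t$, and $n\le\Delta^{O(r)}$ is already required for $\fA$ to accept $n$-node inputs with $\Delta^{O(r)}$-sized identifiers. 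Since $\fA$ is correct on $T^*$, the outputs on $B_{T^*}(v,t)$ satisfy the $\fP$-constraint.

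Finally I would invoke the locality of $\fA$. In both the LOCAL and the volume models, $\fA$'s output at any node $u$ depends only on the induced subgraph on $B(u,r)$ together with its labels (messages travel at most $r$ hops; probes form a connected region of size at most $r$). For every $u\in B_T(v,t)$ we have $B_T(u,r)\subseteq B_T(v,t+r)$, and this neighborhood with its labels is identical in $T$ and $T^*$ by construction. Consequently the outputs of $\fA$ on $B_T(v,t)=B_{T^*}(v,t)$ agree on the two inputs, so the $\fP$-constraint holds at $B_T(v,t)$ as well, giving the desired contradiction. I expect the main obstacle to be a careful handling of the non-backtracking step: one must verify that the multigraph girth really applies to the projected walk, which in turn requires both the proper edge coloring of $T$ and the layered structure of $H$ to prevent the walk from immediately retracing an edge. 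Once this point is secured, the embedding and locality arguments are routine.
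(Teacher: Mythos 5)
Your proof is correct and follows the same route as the paper's: you observe that correctness of $\fA$ at a node is determined by the labeling of its $(t+r)$-ball, and that any proper $H(t+r,\Delta)$-labeling restricted to such a ball looks like a unique-ID labeling. The paper leaves the crucial local-uniqueness step implicit (in the sentence ``the set of allowed labelings relative to $H(t+r,\Delta)$ is a subset of all labelings with unique identifiers''), whereas you prove it explicitly via the non-backtracking-walk/girth argument; that is a welcome addition rather than a deviation.
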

\begin{proof}
The validity of $\fA$ depends on all possible ways of labeling an $(r+t)$-hop neighborhood of a vertex $u$ with identifiers, as the correctness of the output at $u$ depends only on the outputs given by $\fA$ at each vertex $v$ in $u$'s $t$-hop neighborhood (by the definition of an LCL problem), and each such output depends only on the $r$-hop neighborhood of the respective node $v$. But the set of allowed labelings relative to $H(t+r, \Delta)$ is a subset of all labelings with unique identifiers. 
\end{proof}

\begin{lemma}
\label{lem:distinct_trees}
The number of non-isomorphic $n$-node trees with maximum degree $\Delta = O(1)$ that are labeled with a proper $\Delta$-edge coloring and an $H$-labeling for $H = H(n, \Delta)$, is $2^{O(n)}$. 
\end{lemma}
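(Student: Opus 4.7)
The strategy is to bound separately the three layers of data that specify such a tree up to isomorphism: the underlying tree, its proper $\Delta$-edge coloring, and the $H$-labeling. Since each isomorphism class is determined by any rooted, ordered encoding of the structure, it suffices to upper bound the number of such encodings; I will root each tree at an arbitrary vertex $v_0$, order the children at every node, and generate the encoding by a BFS (or DFS) traversal.

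\textbf{Step 1: the rooted ordered tree.} A rooted ordered tree with $n$ nodes and maximum degree $\Delta$ is determined by its Lukasiewicz/DFS word of length $2(n-1)$ over the alphabet $\{0,1,\dots,\Delta\}$. Hence the number of such trees is at most $(\Delta+1)^{2n} = 2^{O(n)}$, using $\Delta = O(1)$.

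\textbf{Step 2: the edge coloring.} Given the rooted ordered tree, each of its $n-1$ edges receives one of $\Delta$ colors, so at most $\Delta^{n-1} = 2^{O(n)}$ proper $\Delta$-edge colorings are possible (no need even to enforce properness for the upper bound).

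\textbf{Step 3: the $H$-labeling.} This is the step where the structure of the ID graph is used. Fix the rooted edge-colored tree. First choose $h(v_0) \in V(H)$; by Definition~\ref{def:id_graph}(2) this is one of $|V(H)| = \Delta^{10n}$ options, and since $\Delta$ is a constant this is $2^{O(n)}$. Now traverse the remaining vertices in BFS order. When we reach a non-root vertex $u$ with parent $p(u)$ via an edge of color $c$, the definition of a proper $H$-labeling forces $h(u)$ to be a neighbor of $h(p(u))$ in $H_c$; by Definition~\ref{def:id_graph}(3) there are at most $\Delta^{10}$ such choices. Over the $n-1$ non-root vertices this gives at most $\Delta^{10(n-1)} = 2^{O(n)}$ extensions. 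Multiplying, the total number of $H$-labelings of a fixed rooted edge-colored tree is $2^{O(n)}$.

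\textbf{Putting the pieces together and main obstacle.} Any isomorphism class of the objects in the lemma is described by at least one rooted ordered encoding of the three kinds above, so the number of isomorphism classes is at most the product $2^{O(n)} \cdot 2^{O(n)} \cdot 2^{O(n)} = 2^{O(n)}$, as claimed (the absolute constants absorb the dependence on $\Delta$, which is assumed to be $O(1)$). The only substantive point is Step~3: a priori, allowing labels from a set of size $\Delta^{10n}$ would give $2^{\Theta(n \log n)}$ labelings if used freely on each vertex, and the whole reason the count collapses to $2^{O(n)}$ is the bounded-degree property of each $H_c$, which is exactly what the ID-graph definition is designed to buy us. No other property of $H$ (girth, chromatic number) is needed here; those will enter later when $H$ is actually used to simulate the lower bound argument.
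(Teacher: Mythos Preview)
Your proof is correct and follows essentially the same approach as the paper: decompose the count into (tree shape) $\times$ (edge coloring) $\times$ ($H$-labeling), and for the last factor fix the root's label in $V(H)$ and propagate along edges using the $\Delta^{10}$ degree bound from Definition~\ref{def:id_graph}(3). The only cosmetic difference is that the paper cites the classical $O(2.96^n)$ bound on unlabeled trees for Step~1, whereas you count rooted ordered trees directly via an encoding; both yield $2^{O(n)}$ and the crucial Step~3 is identical.
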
 
\begin{proof}
There are $O(2.96^{n})$ non-isomorphic unrooted trees on $n$ vertices \cite{oeis_num_of_trees}. Additionally, there are at most $\Delta^{n - 1}$ ways of assigning edge colors from the set $[\Delta]$ once the tree is fixed. 
Hence, there are $2^{O(n)}$ non-isomorphic trees labeled with edge colors. 

For any such tree $T$, pick an arbitrary vertex $u$ in it. 
There are $|V(H)| = \Delta^{O(n)}$ ways of labeling $u$ with a label from $H$ (Property 2 in \cref{def:id_graph}). 
Once $u$ is labeled with a label $h(u) \in V(H)$, we can construct the labeling of the whole tree $T$ by gradually labeling it, vertex by vertex, always labeling a node whose neighbor was labeled already. Every time we label a new node $v$ with the label $h(v) \in V(H)$ such that $v$ is adjacent to an already labeled node $w$ via an edge of color $c$, we have $\poly(\Delta)$ possible choices for the label of $v$, since this is the degree of $h(w)$ in $H_c$ (Property 3 in \cref{def:id_graph}). Hence, the total number of $H$-labelings of $T$ is $2^{O(n)}$. 

Putting everything together, the number of non-isomorphic trees labeled by edge colors from $[\Delta]$ and IDs from $H$ is $2^{O(n)}$. 
\end{proof}

The following lemma is analogous to \cref{lem:volume_derandomization}, i.e., the derandomization from \cite{chang2016exp_separation}, but working relative to the ID graph and only on the set of trees allows us to derandomize a randomized \volume algorithm with probe complexity $t(n)$ to obtain a deterministic \volume algorithm with probe complexity $t(2^{O(n)})$, while the original construction only obtains a deterministic \volume algorithm with probe complexity $t(2^{O(n^2)})$.

\begin{lemma}[From randomized \lca to deterministic  \volume relative to an ID graph]
\label{lem:so_rand_to_det}
If there exists a randomized \lca algorithm with probe complexity $t(n) = o(\log n)$ for Sinkless Orientation on trees with maximum degree $\Delta = O(1)$ that are properly $\Delta$-edge colored, then there also exists a deterministic \volume algorithm for Sinkless Orientation on trees with maximum degree $\Delta = O(1)$ that are properly $\Delta$-edge colored with probe complexity $o(n)$ relative to ID graphs $\{H(n, \Delta)\}_{n \geq n_0}$ where $n_0$ is a large enough constant. 
\end{lemma}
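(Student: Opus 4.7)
The plan is to follow the template of \cref{lem:volume_derandomization} (the Chang--Kopelowitz--Pettie style derandomization) but sharpened by substituting the ``all polynomial-ID labeled graphs'' union bound with the tighter ``all $H$-labeled trees'' one provided by \cref{lem:distinct_trees}. First I would invoke \cref{lem:rand_far_probes} to turn the given randomized \lca algorithm $\fA$ into one with the same asymptotic probe complexity that does not use far probes and that assumes only identifiers from a polynomial range. As in the proof of \cref{lem:volume_derandomization}, I can then replace the assumption of unique IDs by giving each node access to a private random bit string: the first $O(\log N)$ bits at each node yield unique identifiers from $[\poly(N)]$ with probability $1 - 1/\poly(N)$, which can be absorbed into the failure probability at the cost of a constant factor in the success exponent.

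Next I would feed the modified algorithm the fake input parameter $N := |V(H(n,\Delta))| = \Delta^{10n} = 2^{\Theta(n)}$ instead of $n$. The probe complexity on any single query is then at most $t(N) = o(\log N) = o(n)$, and by appropriately boosting the success parameter the algorithm fails to produce a correct global output with probability at most $1/N^c$ for any desired constant $c$. A relevant input to the derandomization is a triple consisting of an $n$-node tree, a proper $\Delta$-edge coloring of it, and a proper $H(n,\Delta)$-labeling, and I use the $H$-label at each node as the (polynomial-from-the-algorithm's-perspective, exponential-in-$n$) identifier fed into the simulation.

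For the union bound, let $\rho \colon V(H) \to \{0,1\}^{\mathbb N}$ and $\rho^* \in \{0,1\}^{\mathbb N}$ be drawn uniformly, where $\rho(v)$ serves as the private random string at any node whose $H$-label is $v$, and $\rho^*$ serves as the shared randomness. For each fixed (isomorphism class of) $\Delta$-edge colored and $H$-labeled $n$-node tree, the resulting simulation $\fA_{\rho,\rho^*}$ fails with probability at most $1/N^c$ over $(\rho,\rho^*)$. By \cref{lem:distinct_trees} there are only $2^{O(n)}$ such classes, so picking $c$ large enough (independently of $n$) makes the expected number of failing inputs strictly smaller than one. Hence there exist deterministic $\rho_{\dett}$ and $\rho^*_{\dett}$ under which $\fA_{\rho_{\dett},\rho^*_{\dett}}$ correctly solves Sinkless Orientation on every such input, which is precisely a deterministic \volume algorithm with probe complexity $t(N) = o(n)$ working relative to $\{H(n,\Delta)\}_{n \ge n_0}$.

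The main obstacle is keeping the union bound at $2^{O(n)}$ rather than the $2^{O(n^2)}$ from arbitrary labelings with exponentially many identifiers; this is exactly what the $H$-labeling restriction buys us via \cref{lem:distinct_trees}, and why the lemma is phrased relative to an ID graph rather than for ordinary unique-ID inputs. A minor technical point to be careful about is that the deterministic algorithm one constructs must use only the $H$-label of each node it probes (not a globally unique ID), and that the far-probe elimination must have been performed \emph{before} one reinterprets IDs as $H$-labels, so that the simulated $\fA$ only ever probes nodes it has discovered by walking along edges of the input tree.
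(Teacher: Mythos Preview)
Your proposal is correct and matches the paper's own approach essentially step for step: the paper's proof simply says to redo \cref{lem:volume_derandomization} with $N = 2^{O(n)}$ in place of $N = 2^{O(n^2)}$, using the $2^{O(n)}$ bound from \cref{lem:distinct_trees} for the union bound, which is exactly what you spell out.

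One small point worth making explicit (the paper does mention it, and your write-up skirts around it): for the map $\rho \colon V(H) \to \{0,1\}^{\mathbb N}$ to supply \emph{distinct} private random strings to distinct nodes, you need the $H$-labeling of the $n$-node input tree to be injective. This holds because $\girth(H(n,\Delta)) \ge 10n > n$: a repeated label along the unique path between two tree vertices would yield a non-backtracking closed walk of length at most $n-1$ in the multigraph $H$ (consecutive edges carry different colors, hence are distinct multigraph edges), contradicting the girth bound. Once you have injectivity, your claim that ``the first $O(\log N)$ bits at each node yield unique identifiers with probability $1 - 1/\poly(N)$'' goes through, and the rest of the argument is exactly as you describe.
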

\begin{proof}
We omit the proof as it can be proven in the exact same way as \cref{lem:volume_derandomization}, except that now we tell the randomized algorithm that the number of nodes is $N = 2^{O(n)}$ instead of $N = 2^{O(n^2)}$ as we need to union bound over a smaller number of labeled graphs. One also needs to make use of the fact that all the IDs in an $n$-node $H(n, \Delta)$-labeled graph are unique, which follows from the fact that the girth of $H(n, \Delta)$ is strictly larger than $n$.
\end{proof}

\paragraph{Hardness of Deterministic Sinkless Orientation relative to an ID graph}
\label{paragraph:det_so_lower_bound}
We now prove that there cannot be a deterministic local algorithm with volume complexity $o(n)$ that works with exponential IDs, even when those IDs satisfy constraints defined by an ID graph $H(n, \Delta)$.

To do so, we first show that an $o(n)$-probe \volume algorithm implies that there exists some constant $n^* \in \mathbb{N}$ and a deterministic \local algorithm that solves Sinkless Orientation on an (infinite) properly $H(n^*,\Delta)$-labeled tree in strictly less than $n^*$ rounds. This is an analogue of \cref{lem:exponential_id_speedup}.

\begin{lemma}
\label{lem:so_speedup}
If there exists a deterministic \volume algorithm $\fA$ that solves Sinkless Orientation on $n$-node trees with maximum degree $\Delta = O(1)$ that are properly $\Delta$-edge colored relative to $H(n, \Delta)$ and with a probe complexity of $f(n) \leq n/(3\Delta)$ for all large enough $n$, then there exists a constant $n^*$ and a deterministic \local algorithm $\fA'$ that solves Sinkless Orientation on all (possibly infinite) trees with maximum degree $\Delta$ that are properly $\Delta$-edge colored and $H(n^*, \Delta)$-labeled in fewer than $n^*$ rounds.
\end{lemma}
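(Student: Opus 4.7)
My plan is to define $\fA'$ as a local simulation of $\fA$ with the size parameter set to a sufficiently large constant $n^*$, and to prove correctness by contradiction: if $\fA'$ ever fails on an infinite tree, I will explicitly build an $n^*$-node valid input tree on which $\fA$ itself fails, contradicting the hypothesis. Concretely, fix $n^* \geq 6\Delta$ large enough that $f(n^*) \leq n^*/(3\Delta)$ holds. At each node $u$, algorithm $\fA'$ first collects the $f(n^*)$-hop ball around $u$ in $f(n^*)<n^*$ LOCAL rounds, then simulates $\fA$ at $u$ on this view, feeding it $n = n^*$ and outputting the orientations of $u$'s incident edges produced by $\fA$.

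To argue correctness, suppose $\fA'$ produces an invalid sinkless orientation on some tree $T$. The failure must be local: either (a) the simulation at some node $u$ marks every incident half-edge as incoming, or (b) the simulations at two adjacent nodes $u, v$ disagree on the orientation of the shared edge $(u,v)$. Let $S_w$ denote the set of nodes probed when the simulation starts at $w$, and set $S := S_u$ in case~(a) and $S := S_u \cup S_v$ in case~(b). Since each $S_w$ is connected and contains $w$, and the edge $(u,v) \in E(T)$ connects $S_u$ and $S_v$ in case~(b), the induced subgraph $T[S]$ is a connected subtree of $T$, with $|S| \leq 2(f(n^*)+1)$.

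Next I build an $n^*$-node tree $T^*$ extending $T[S]$. The number of external half-edges of $S$ in $T$ is $e = \sum_{w\in S}(d_w^T - d_w^S) \leq (\Delta-2)|S|+2$, so $|S| + e \leq (\Delta-1)|S|+2 \leq (2/3)n^* + 2\Delta \leq n^*$. Hence I can first attach a single fresh pendant to each external half-edge of $S$, copying the external port's color from $T$ to the new edge, and then pad up to exactly $n^*$ vertices by appending a chain of new pendants off some leaf, alternating two colors on the chain edges. For each new vertex added, I pick its $H(n^*,\Delta)$-label to be a neighbor in $H_c$ of its parent's label, where $c$ is the color of the connecting edge; such a neighbor always exists by property~3 of \cref{def:id_graph}. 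Because $\girth(H(n^*,\Delta)) \geq 10n^* > n^* - 1$ and because proper $\Delta$-edge coloring forces the walk along any tree path to be non-backtracking in $H$, two equal labels in $T^*$ would give a non-backtracking closed walk of length $<\girth(H)$, which is impossible. Hence all labels in $T^*$ are distinct and $T^*$ is a valid input to $\fA$.

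Finally, since $T^*$ agrees with $T$ on all of $T[S]$---same vertices, edges, port numbering, edge colors, $H$-labels, and (crucially) degrees of $w \in S$---and since $\fA$ is deterministic and in the volume model only probes a connected region around the queried vertex, the executions of $\fA$ at $u$ (and at $v$ in case~(b)) on $T^*$ replay exactly the same probe sequences as on $T$ and therefore exhibit the same local failure. This contradicts $\fA$'s correctness on $n^*$-node trees. The main obstacle is the quantitative budgeting step: the hypothesis $f(n) \leq n/(3\Delta)$ is tailored precisely so that $|S| + e \leq n^*$, and the cleanliness of the $H$-labeling extension (and the propagation of label uniqueness to new vertices) relies essentially on the tailored definition of the ID graph.
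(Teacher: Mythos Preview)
Your proof is correct and follows essentially the same approach as the paper: fix $n^*$, simulate $\fA$ locally, and if it fails extract the probed set $S$, complete it to an $n^*$-node properly $H(n^*,\Delta)$-labeled tree, and derive a contradiction. The only cosmetic difference is that the paper keeps the actual neighbors $N(S)$ from $T$ (bounding $|S\cup N(S)|\le \Delta|S|<n^*$) whereas you attach fresh pendants to the external half-edges of $S$; you are also more explicit than the paper about extending the $H$-labeling and about why the girth of $H(n^*,\Delta)$ forces all labels in $T^*$ to be distinct.
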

\begin{proof}
Consider running the algorithm $\fA$ for fixed $n^*$  and $\Delta$ such that $f(n^*) \leq n^*/(3\Delta)$ on \emph{any} tree $T$ having a maximum degree of $\Delta$ that is properly $\Delta$-edge colored and $H(n^*, \Delta)$-labeled. We now prove that $\fA$ solves the Sinkless Orientation problem on $T$. 
If not, then there exists a node $u$ such that either all the edges are oriented towards $u$ or $u$ has a neighbor $v$ such that the outputs of $u$ and $v$ are inconsistent. Consider now 
the set $S$ consisting of the at most $2 \cdot (n^*/3\Delta)$ vertices that $\fA$ probes in order to compute the answer for both $u$ and $v$. The set $S \cup N(S)$ has at most $\Delta \cdot |S| < n^*$ vertices. If we run $\fA$ on $G[S \cup N(S)]$, it will fail, too, since the two runs of $\fA$ on $T$ and $T[S \cup N(S)]$ are identical. 
Appending a non-zero number of vertices to  the vertices from $N(S)$ and labeling them so that the final labeling is still a $H(n^*, \Delta)$-labeling gives a graph $T'$ on exactly $n^*$ nodes that is properly $H(n^*, \Delta)$-labeled such that $\fA$ fails on $T'$. 
This is a contradiction with the correctness of $\fA$ on $n^*$-node trees. 

Hence, we get a deterministic \volume algorithm for Sinkless Orientation that performs at most $n^*/(3\Delta)$ probes and that works for any $H(n^*,\Delta)$-labeled tree $T$. This directly implies that there exists a \local algorithm with a round complexity of at most $n^*/(3\Delta)$ that solves Sinkless Orientation on any $H(n^*,\Delta)$-labeled tree $T$, as needed.
\end{proof}

Finally, the following theorem is a simple adaptation of the lower bound for Sinkless Orientation in the \local model via round elimination in \cite{brandt_LLL_lower_bound}). 
\begin{theorem}
\label{thm:so_hard_in_local}
The deterministic complexity of Sinkless Orientation in the \local model relative to $H(k, \Delta)$ is at least $k$. 
\end{theorem}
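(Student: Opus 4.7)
My approach is to adapt the round-elimination lower bound of Brandt et al.~\cite{brandt_LLL_lower_bound} to the setting where identifiers come from a proper $H(k,\Delta)$-labeling rather than from $[n]$. That paper defines a sequence of derived LCLs $\Pi_0,\Pi_1,\ldots$ with $\Pi_0$ equal to Sinkless Orientation, a round-elimination lemma stating that a $t$-round deterministic algorithm for $\Pi_i$ yields a $(t-1)$-round deterministic algorithm for $\Pi_{i+1}$, and a base case showing that $\Pi_k$ is unsolvable in $0$ rounds for every $k \ge 1$. Chaining these $k$ times gives the lower bound of $k$ on $\Pi_0$. My plan is to verify that both ingredients still go through when the identifier space is restricted to an $H(k,\Delta)$-labeling rather than to $[n]$.

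For the round-elimination step, I would use the construction of~\cite{brandt_LLL_lower_bound} verbatim: given a $t$-round algorithm $\fA$ for $\Pi_i$, the derived $(t-1)$-round algorithm $\fA'$ for $\Pi_{i+1}$ labels every half-edge $(v,e)$ with the set of all outputs $\fA$ could assign at $(v,e)$, taken over every way of extending the $(t-1)$-neighborhood of $v$ seen by $\fA'$ to a full $t$-neighborhood. The only thing to verify is that every such extension is realizable inside an $H$-labeled $\Delta$-edge-colored tree, so that $\fA$'s output on that extension is well-defined. This is built into the definition of $H = H(k,\Delta)$: the girth lower bound $\girth(H) \ge 10k$ ensures that radius-$k$ balls in any $H$-labeled tree lift injectively into $H$ and never force label collisions between distinct nodes, while Property 3 ($\deg_{H_c}(v) \ge 1$ for every color $c$ and every vertex $v$) guarantees that any partial $H$-labeling extends by one more vertex of any prescribed incident edge color. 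No other part of the standard case analysis on node- and edge-constraints uses the identifier space.

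For the base case, a hypothetical $0$-round algorithm for $\Pi_k$ relative to $H$ is a collection of functions $f_c\colon V(H)\to L$, one per edge color $c\in[\Delta]$, where $L$ is the output alphabet of $\Pi_k$. Unrolling the $k$ layers of round elimination starting from Sinkless Orientation, the existence of such $f_c$ ultimately translates, for each color $c$, into the existence of a subset $A_c\subseteq V(H)$ that is independent in $H_c$ and has at least $|V(H)|/\Delta$ vertices; Property 5 of the ID graph forbids this directly, completing the contradiction. Note that this is the unique place where the original argument in~\cite{brandt_LLL_lower_bound} interacts with the identifier space, which is precisely why Properties 3, 4, and 5 were chosen the way they were.

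The main obstacle I anticipate is the bookkeeping in the last paragraph: the derived problems $\Pi_i$ have increasingly intricate label sets, and one must argue that the combinatorial core of the $0$-round obstruction for $\Pi_k$ matches exactly the independent-set statement that Property 5 rules out, rather than a weaker or stronger statement. I would handle this by tracking the sequence $\{\Pi_i\}$ explicitly (following the exposition of the round-elimination machinery in~\cite{brandt_LLL_lower_bound}) and verifying inductively that, after $k$ rounds, a $0$-round solution on an $H$-labeled $\Delta$-regular input tree would project to a pair $(A_c,B_c)$ of the claimed kind in at least one color class $c$.
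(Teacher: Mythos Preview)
Your proposal is correct in outline, but it takes a detour that the paper avoids. The paper's proof does not introduce a sequence $\Pi_0,\Pi_1,\ldots$ of derived problems at all: it uses the fact that Sinkless Orientation is a \emph{fixed point} of round elimination on $\Delta$-edge-colored trees. Concretely, from a $t$-round algorithm $\fA$ for Sinkless Orientation the paper builds a $(t-\tfrac12)$-round edge-centered algorithm $\fA'$ for the \emph{same} problem (orient $e=(u,v)$ from $u$ to $v$ iff some $H$-extension of $B(u,t)\setminus B(e,t-1)$ makes $\fA$ orient $(u,e)$ out), and then a $(t-1)$-round node-centered algorithm, again for Sinkless Orientation. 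Iterating lands directly at a $0$-round algorithm for Sinkless Orientation itself, whose output is a map $V(H)\to[\Delta]$ picking an outgoing color; pigeonhole then gives a color class of density $\ge 1/\Delta$, and Property~5 says it is not independent in $H_c$, yielding two adjacent IDs that both point ``out'' along the same edge. This completely sidesteps the bookkeeping you flag as your main obstacle: there is no growing label alphabet, no $\Pi_k$ to analyze, and the base case is literally the independent-set statement for Sinkless Orientation. Your route would work too (since the $\Pi_i$ are all equivalent to Sinkless Orientation here), but proving that equivalence is essentially the content of the paper's direct argument, so you would be doing strictly more work for the same conclusion.
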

This theorem is proved in a parallel paper~\cite{brandt_chang_grebik_grunau_rozhon_vidnyaszky2021LCLs_on_trees_descriptive} developing the technique for a different purpose. For completeness we also leave a proof to \cref{sec:appendix}. 
Finally, we can put all the pieces together to prove \cref{thm:so_lower}. 

\begin{proof}[Proof of \cref{thm:so_lower}]
We first apply the derandomization of \cref{lem:so_rand_to_det} to deduce the existence of a deterministic algorithm with $o(n)$ probes, relative to $\{H(n, \Delta)\}_{n \in \mathbb{N}}$. 
Afterwards, we use \cref{lem:so_speedup} to conclude that there is an $r > 0$ and a deterministic local algorithm that solves sinkless orientation in less than $r$ rounds relative to an ID graph $H(r, \Delta)$, which is finally shown to be impossible by \cref{thm:so_hard_in_local}. 
\end{proof}
\begin{remark}
One can check that we actually prove the existence of some $\eps = \eps(\Delta) > 0$ such that any randomized LCA algorithm for sinkless orientation needs at least $\eps \log n$ probes for all $n \ge n_0$. 
\end{remark}

\section{Upper Bound for LLL on Constant Degree Graphs}

In this section, we complement the lower bound by proving the following matching upper bound result.
\begin{theorem}
\label{theorem:lll_upper}
There exists a fixed constant $c$ such that the randomized \lca /\volume complexity of the LLL on constant degree graphs under the polynomial criterion $p(e\Delta)^c \leq 1$ is $O(\log n)$.
\end{theorem}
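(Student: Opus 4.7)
My plan is to implement a shattering-style LLL algorithm of the Moser--Tardos / Beck / Fischer--Ghaffari lineage directly in the \lca model, exploiting the slack in the polynomial criterion $p(e\Delta)^c \le 1$ for sufficiently large constant $c$. Using the shared random seed, the algorithm implicitly samples one value for every random variable $X_1,\ldots,X_m$; call an event $\mathcal{E}_u$ \emph{bad} if the sampled values violate it. On a query at an event $\mathcal{E}_v$, the \lca performs a breadth-first search from $v$ in the dependency graph, following only bad neighbors, thereby discovering the bad connected component $C$ containing $v$ (or establishing that $v$ itself is good and surrounded by good neighbors). It then deterministically computes a satisfying assignment to the variables of events in $C$ while keeping every variable shared between $C$ and its exterior frozen at its sampled value, and outputs the resulting values for $\mathrm{vbl}(\mathcal{E}_v)$. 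Because the \lca model restricts probes but not computation, the deterministic fix inside the component may simply enumerate assignments.

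The central analytic step is the classical shattering tail bound. Any connected subgraph $S$ of the dependency graph of size $s$ contains a $2$-hop independent subset $I$ of size $\Omega(s)$ (with the hidden constant depending only on $\Delta$), and bad events indexed by a $2$-hop independent set are mutually independent, so the event that a fixed such $I$ is entirely bad has probability at most $p^{|I|} = p^{\Omega(s)}$. Combining this with the tree-counting bound that the number of connected subgraphs of size $s$ through a given vertex is at most $\alpha^s$ for an absolute constant $\alpha = \alpha(\Delta)$, and a union bound over the $n$ possible roots, yields
\[
\Pr\bigl[\,\text{some bad component has size} \ge s\,\bigr] \;\le\; n \cdot \alpha^s \cdot p^{\Omega(s)}.
\]
Choosing $c$ large enough in $p(e\Delta)^c \le 1$ drives the right-hand side below $n^{-c'}$ for any desired constant $c'$ already at $s = K \log n$ for a suitable constant $K$, absorbing the error boost required by the \lca correctness condition. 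Consequently, the BFS visits at most $O(\log n)$ events with high probability, each step costing $O(\Delta) = O(1)$ probes, for a total probe complexity of $O(\log n)$.

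Consistency across queries is automatic: the sample is a deterministic function of the shared seed, so the bad components and their deterministic fixes are deterministic functions of the graph, the seed, and the frozen boundary values; two queries landing in the same component therefore see the same component and return the same values on shared variables. The main obstacle is ensuring that the restricted LLL instance on $C$ with frozen boundary variables is still feasible, since freezing variables could in principle leave a boundary-touching bad event with too little freedom. The standard remedy is to include a one-hop buffer of good events around $C$ in the region handled by the deterministic fix, so that every variable of every bad event lies strictly inside the buffered region and can be resampled freely; the extra slack needed to accommodate this buffer is absorbed into the constant $c$. With these ingredients in place, the algorithm solves LLL under $p(e\Delta)^c \le 1$ using $O(\log n)$ probes with high probability, matching the lower bound of \cref{thm:so_lower}.
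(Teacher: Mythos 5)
Your proposal captures the right high-level template — shatter into $O(\log n)$-size components and brute-force each one — but it skips the step that makes the paper's argument actually work: the Fischer--Ghaffari \emph{pre-shattering} phase. You propose to sample all variables once, declare an event ``bad'' iff the sample violates it, and then resample the internal variables of each bad component while freezing every variable shared with the exterior. Nothing in this scheme guarantees the residual instance is feasible. A bad event $\mathcal{E}_v$ may have \emph{all} of its variables shared with good neighbors, so after freezing there is nothing left to change and $\mathcal{E}_v$ stays violated. Your ``one-hop buffer'' fix only relocates the problem: the buffer events now touch the new frozen boundary, and their conditional probabilities given that boundary can be arbitrarily close to $1$ even though they were good under the original sample, so no LLL criterion holds for the residual instance and the brute-force search may simply find nothing. (It also creates a consistency headache you don't address: a good vertex $w$ may lie in the buffers of several bad components whose fixes overwrite $w$'s variables differently, and a query at a vertex $2$ hops from $C$ never learns about $C$'s fix at all.)

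What the paper does instead is not a one-shot sample but a careful iterated commitment: Fischer--Ghaffari's pre-shattering fixes variable values color class by color class in a way that \emph{deterministically} maintains the invariant that every bad event's probability, conditioned on whatever has been fixed so far, stays at most $\Delta^{-\Omega(c)}$. That conditional bound is precisely the feasibility certificate for the residual instance on each leftover component (it still satisfies an LLL criterion with $\Delta$ neighbors), and it is the thing your one-shot ``bad iff violated'' classification cannot provide. The paper's additional contribution is to replace the $O(\log^* n)$-round deterministic $2$-hop coloring used in that pre-shattering with an $O(1)$-round random coloring (postponing the few ``failed'' nodes with colliding colors) so that the pre-shattering itself costs only $\Delta^{O(1)}$ probes per node and the overall cost stays at $O(\log n)$.

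There is also a quantitative issue with your tail bound. Your union bound over connected subgraphs of size $s$, with an independent subset of size $\Omega(s)$ and subgraph count $\alpha^s$ where both the $\Omega(\cdot)$-constant and $\alpha$ depend polynomially on $\Delta$, forces $c$ to grow with $\Delta$ before the product becomes geometrically small — but the theorem promises a single absolute constant $c$ in the polynomial criterion $p(e\Delta)^c\le 1$. The paper instead invokes the Shattering Lemma (Lemma~\ref{lem:shattering}), whose more careful argument (counting separated trees rather than raw connected subgraphs, and using the ``independent of randomness outside $c_2$ hops'' hypothesis) gives a fixed constant independent of $\Delta$.
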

\begin{proof}
The result follows by a slight adaptation of the \local algorithm of \cite{fischer2017sublogarithmic} to the \volume model. In particular, Fischer and Ghaffari show how to shatter a constant-degree graph in $O(\log^*n)$ rounds of the \local model. By shattering, we mean that their algorithm fixes the values for a subset of the random variables such that the following two properties are satisfied. 
\begin{enumerate}
    \item The probability of each bad event conditioned on the previously fixed random variables  is upper bounded by $\Delta^{-\Omega(c)}$.
    \item Consider the graph induced by all the nodes whose corresponding bad event has a non-zero probability of occurring. The connected components in this graph all have a size of $O(\log n)$ with probability $1 - 1/\poly(n)$. 
\end{enumerate}
Note that the shattering procedure -- denoted as the pre-shattering phase -- directly implies a randomized \volume algorithm with probe complexity $O(\log n \cdot  \Delta^{O(\log^*n})$. To see why, note that we can determine the state of all random variables after the pre-shattering phase that a given bad event depends on with a probe complexity of $\Delta^{O(\log^*n)}$ by applying the Parnas-Ron reduction to the $O(\log^* n)$ round \local algorithm. 
This in turn allows us to find the connected component of a given node with $O(\log n \cdot \Delta^{O(\log^* n)})$ probes, as long as the connected component has a size of $O(\log n)$, which happens with probability $1 - 1/\poly(n)$. 
Afterwards, one can find a valid assignment of all the random variables in the connected component in a brute-force centralized manner. Note that the standard LLL criterion $ep(\Delta + 1) \leq 1$ guarantees the existence of such an assignment. 

To improve the probe complexity to $O(\log n)$, we show how to adapt the pre-shattering phase such that it runs in $O(1)$ \local rounds while still retaining the two properties stated above. Once we have shown this, the aforementioned \volume simulation directly proves \cref{theorem:lll_upper}. The pre-shattering phase of \cite{fischer2017sublogarithmic} works by first computing a $2$-hop-coloring with $\Delta^2 + 1$ colors, where a $2$-hop coloring is a coloring that assigns any two nodes having a distance of at most $2$ a different color.
Once this coloring is computed, the algorithm iterates through the constantly many color classes and fixes in each iteration a subset of the random variables. Each random variable will be set with probability $1 - \Delta^{-\Omega(c)}$, even when fixing the randomness outside the $c_1$-hop neighborhood for some fixed constant $c_1 \geq 2$ not depending on $c$ adversarially. The only step that takes more than constant time is the computation of the coloring. Instead of computing the coloring deterministically, we instead assign each node one  out of $\Delta^{c'}$ colors for some fixed positive constant $c' \gg 1$, independently and uniformly at random. We say that a node fails if its chosen color is not unique in its $2$-hop neighborhood. Note that a given node fails with probability at most $1/\Delta^{\Omega(c')}$. We then postpone the assignment of each random variable that affects any of the failed nodes. For all the other random variables, we run the same process as described in \cite{fischer2017sublogarithmic} by iterating through the $\poly(\Delta) = O(1)$ color classes in $O(1)$ rounds of the \local model. The pre-shattering phase of \cite{fischer2017sublogarithmic} still deterministically guarantees that each bad event occurs with probability $1 - \Delta^{-\Omega(c)}$ conditioned on the random variables set in the pre-shattering phase. Moreover, each bad event that does not correspond to a failed node occurs with probability $0$ after the pre-shattering partial assignment with probability at least  $1 - \Delta^{-\Omega(c)}$, independent of the randomness outside the $c_1$-hop neighborhood.  An application of \cref{lem:shattering} therefore guarantees that by choosing $c$ and $c'$ large enough, each connected component after the pre-shattering phase has a size of $O(\log n)$ with high probability in $n$, thus concluding the proof of \cref{theorem:lll_upper}.

\begin{lemma}[The Shattering Lemma, cf. with Lemma 2.3 of \cite{fischer2017sublogarithmic}] 
\label{lem:shattering}

Let $G = (V,E)$ be a graph with maximum degree $\Delta = O(1)$. Consider
a process which generates a random subset $B \subseteq V$ where $Pr[v \in B] \leq \Delta^{-c_1}$
, for some constant $c_1 \geq 1$,
independent on the randomness of nodes outside the $c_2$-hop neighborhood of $v$, for all $v \in V$, for some constant $c_2 \geq 1$. Then, with probability at least $1 - n^{-c_3}$
, for any constant $c_3 < c_1 - 4c_2 - 2$, we have that each connected component of $G[B]$ has size $O(\log n)$.
\end{lemma}

\end{proof}

\section{Lower Bound for Constant Coloring}

\coloring*

\begin{proof}
The upper bound of $O(n)$ follows trivially from the fact that every tree is bipartite. To prove the lower bound, assume for the sake of contradiction that there exists a deterministic \volume algorithm $\mathcal{A}$ that $c$-colors a bounded degree tree with $o(n)$ probes. By a result from Bollobás \cite{BOLLOBAS1978311}, there exists a (connected) graph $G$ on $n$ nodes with chromatic number strictly greater than $c$ that has a constant maximum degree $\Delta_G$ (with the constant only depending on $c$) and girth $g = \Omega(\log_c n)$. Now, we consider the unique infinite $\Delta_H$-regular graph $H$ (up to isomorphism) that contains $G$ as an induced subgraph and $G$ and $H$ have the same set of cycles. We choose $\Delta_H$ as small as possible such that $(\Delta_H - 1)^{g/4} \geq n^{10}$. Note that $\Delta_H = \poly(\Delta_G)$ and therefore $H$ has constant maximum degree. Now, we assign each node in $H$ an identifier uniformly and independently at random from the set $\{1,2,\ldots,n^{10}\}$. Note that these identifiers are not unique. Moreover, we also randomize the port assignment in $H$ such that each node chooses its port assignment independently and each permutation has the same probability to be chosen. Now, for every query corresponding to a node in $G$, we run algorithm $\mathcal{A}$ on $H$ to provide an answer to the query. Even though $H$ has an infinite number of vertices and contains cycles, we tell $\mathcal{A}$ that it is a tree with exactly $n$ vertices. Note that the range of identifiers supports that illusion, though the algorithm might encounter two nodes with the same ID or detect a cycle. We now need the following claim.

\begin{claim*}
	With strictly positive probability over the randomness of the ID-and port assignment, all nodes probed by $\mathcal{A}$ while answering the $n$ different queries got assigned pair-wise distinct IDs. Moreover, while answering the query for some node $v$, $\mathcal{A}$ does not probe a node $u$ corresponding to a node in $G$ such that the distance between $u$ and $v$ is at least $g/4$.
\end{claim*}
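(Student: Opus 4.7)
The plan is to bound two bad events separately and combine them via a union bound, showing the probability that both fail is strictly positive. The events are (E1) two distinct probed nodes share an ID, and (E2) some query at $v$ probes a $G$-node $u$ with $d_G(v,u)\geq g/4$.

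For E1: across the $n$ queries the algorithm makes at most $n\cdot o(n)=o(n^2)$ probes, and thus encounters at most $o(n^2)$ distinct nodes of $H$. Any two distinct nodes have independent uniform IDs in $[n^{10}]$, so each pair collides with probability $1/n^{10}$. A union bound over the $O(n^4)$ pairs yields $\Pr[E_1]=o(1/n^6)$.

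For E2, fix a query at $v\in V(G)$. The key structural fact is that since $H$ preserves exactly the cycles of $G$, the $H$-distance between two $G$-vertices equals their $G$-distance, the girth of $H$ equals $g$, and hence the ball $B_H(v,g/2-1)$ is isomorphic to the rooted ball of radius $g/2-1$ in the infinite $\Delta_H$-regular tree. The heart of the argument is the following symmetry bound: for every fixed vertex $w$ at depth $d$ in this ball,
\[
\Pr[w\text{ is probed by }\mathcal{A}]\;\leq\; T/N_d,\quad\text{where }N_d=\Delta_H(\Delta_H-1)^{d-1}.
\]
The justification is that $\mathcal{A}$ is deterministic and depends only on the observed IDs and port numbers; since the IDs are i.i.d.\ uniform on $[n^{10}]$ and the port numbers at each node form uniformly random permutations, the distribution of the labeling is invariant under the tree automorphisms of the ball that act transitively on depth-$d$ vertices. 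Hence all depth-$d$ vertices are equiprobable as probes, and since the expected number of probed depth-$d$ vertices is at most $T$, each individual depth-$d$ vertex is probed with probability at most $T/N_d$. For nodes at distance $d\geq g/2$ I would additionally couple the exploration on $H$ with one on an infinite $\Delta_H$-regular tree $\widetilde T$; the coupling stays exact unless $E_1$ occurs or the probed region closes a cycle of $H$, both of which contribute only lower-order error.

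Given the symmetry bound and the fact that $G$ itself has girth $g$ and maximum degree $\Delta_G$, the number of $G$-vertices at distance exactly $d\geq g/4$ is at most $\Delta_G(\Delta_G-1)^{d-1}$, so
\[
\Pr[E_2\mid v]\;\leq\;\sum_{d\geq g/4}\Delta_G(\Delta_G-1)^{d-1}\cdot\frac{T}{N_d}\;=\;O\!\left(T(\Delta_G/\Delta_H)^{g/4}\right).
\]
Using $(\Delta_H-1)^{g/4}\geq n^{10}$ and the Bollob\'as choice of $G$, which makes $\Delta_G^{g/4}$ only a small polynomial in $n$, this evaluates to $o(1/n^8)$, and a further union bound over the $n$ queries gives $\Pr[E_2]=o(1/n^7)$. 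Combined with $\Pr[E_1]=o(1/n^6)$, we have $\Pr[E_1\cup E_2]=o(1)<1$ for $n$ large, so with strictly positive probability neither bad event occurs. The main obstacle is the symmetry step: one must argue carefully that the deterministic algorithm's probe choices are equivariant under the relevant tree automorphisms, and handle the low-probability event that the exploration leaves the tree portion of $H$ or closes a cycle before reaching the target vertex.
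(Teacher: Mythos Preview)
Your handling of $E_1$ is essentially identical to the paper's. For $E_2$, however, you take a genuinely different route: you argue directly via symmetry that each individual depth-$d$ vertex is probed with probability at most $T/N_d$, whereas the paper performs a chain of three reductions culminating in a ``guessing game'' where the algorithm must pick $\leq n$ indices out of $N_{g/4}\geq n^{10}$ without any information correlated with which indices correspond to $G$-vertices.

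The symmetry step, which you yourself flag as the main obstacle, is a real gap as written. You assert that invariance of the \emph{labeling} under rooted-tree automorphisms of the ball implies that all depth-$d$ vertices are equiprobable as probes. But the algorithm's probe set is a function of the labeled graph $(H,L)$, not merely of the labeling of the ball, and a ball automorphism $\sigma$ with $\sigma(w)=w'$ does \emph{not} extend to an automorphism of $H$: the subgraph of $H$ hanging beyond a $G$-leaf $w$ differs structurally from the one beyond a non-$G$ leaf $w'$. So the algorithm may probe past depth $g/4$ through one boundary vertex, observe something there, and use that to decide whether to probe $w$; the equiprobability conclusion then no longer follows from label invariance alone. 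Your coupling remark addresses only $d\geq g/2$, yet the issue is already present for $g/4\leq d<g/2$.

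The paper's reductions are precisely what plug this hole. Reduction~2 observes that if the algorithm exits the $g/4$-ball through a non-$G$ boundary vertex, what lies beyond is an infinite $\Delta_H$-regular tree with fresh i.i.d.\ labels, which the algorithm can simulate internally; and if it exits through a $G$ boundary vertex, it has already achieved the bad event. Thus one may confine the algorithm to the $g/4$-ball without loss. Reduction~3 then indexes the $N_{g/4}$ leaves by their port-path from $v$ and shows that, under the random port assignment, the event that a fixed index corresponds to a $G$-vertex is independent of the information the algorithm actually receives. At that point a clean union bound gives $\Pr\leq n\cdot n/N_{g/4}\leq 1/n^{8}$.

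A minor secondary point: your geometric sum $\sum_{d\geq g/4}\Delta_G(\Delta_G-1)^{d-1}\cdot T/N_d$ is both unnecessary and potentially loose---the Bollob\'as construction only guarantees $g=\Omega(\log_c n)$, so $\Delta_G^{g/4}$ need not be a \emph{small} polynomial in $n$. The paper sidesteps this entirely by using the trivial bound that at most $n$ of the leaves are $G$-vertices.
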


Before we prove this claim (in Lemma~\ref{lem:no_far_vertex}), we show how it implies \cref{theorem:constant_coloring_theorem}.
According to the probabilistic method, there exists a fixed assignment of identifiers and ports such that when we run the process described above with this fixed assignment, there do not exist two distinct nodes that got assigned the same ID such that $\mathcal{A}$ probed these two nodes at some point in the process of answering the $n$ different queries, and for each node $v$ that $\mathcal{A}$ is queried on, $\mathcal{A}$ does not probe a node $u$ corresponding to a node in $G$ during answering the query for $v$ such that the distance between $u$ and $v$ is strictly greater than $g/4$.
Now, let $v$ and $w$ be two arbitrary neighboring nodes in the graph $G$. Consider the graph induced by all the nodes that $\mathcal{A}$ has probed in the graph $H$ when answering the query for $v$ and $w$.
Note that this induced graph does not contain any cycle, as otherwise when queried on $v$ or $w$, algorithm $\mathcal{A}$ would have seen a node in $G$ with a distance strictly greater than $g/4$. Hence, the induced graph is a bounded-degree forest with $o(n)$ vertices and all the vertices have a unique identifier.
Hence, for $n$ large enough, we can add additional vertices and edges to make it a bounded-degree tree $T_{v,w}$ on $n$ vertices such that $T_{v,w}$ would be a valid input to the algorithm and when queried on $T_{v,w}$, the colors that $\mathcal{A}$ outputs at $v$ and $w$ would still be the same as when queried on $H$. Note that here we make use of the fact that $\mathcal{A}$ is deterministic, as otherwise we would not have the guarantee that $\mathcal{A}$ would probe the exact same vertices in the exact same order.
Now, as we assumed that $G$ has a chromatic number strictly greater than $c$, there are two neighboring nodes $v, w$ in $G$ that get assigned the same color by $\mathcal{A}$. As these two nodes are also neighboring in $T_{v,w}$, $\mathcal{A}$ outputs the same color for two neighboring nodes in $T_{v,w}$. This is however a contradiction as $T_{v,w}$ is a valid input for $\mathcal{A}$ and we assumed that $\mathcal{A}$ is a correct deterministic algorithm.
\end{proof}

It remains to prove the claim in the proof of Theorem~\ref{theorem:constant_coloring_theorem}, which we restate as Lemma~\ref{lem:no_far_vertex}.
\begin{lemma}
\label{lem:no_far_vertex}
	With strictly positive probability over the randomness of the ID-and port assignment, there do not exist two distinct nodes that got assigned the same ID such that $\mathcal{A}$ probed these two nodes at some point in the process of answering the $n$ different queries. Moreover, while answering the query for some node $v$, $\mathcal{A}$ does not probe a node $u$ corresponding to a node in $G$ such that the distance between $u$ and $v$ is at least $g/4$.
\end{lemma}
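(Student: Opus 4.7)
The plan is to upper-bound the probability of each of the two bad events by $o(1)$ and then invoke the probabilistic method: since the two probabilities sum to $o(1) < 1$ for large $n$, there must exist a fixed choice of IDs and port assignments avoiding both. The first event (an ID collision among probed distinct nodes) is the simpler one. The total number of probes across all $n$ queries is at most $n \cdot o(n) = o(n^2)$, and by deferred decisions the ID of each freshly probed node is independent and uniform in $[n^{10}]$. A union bound over pairs of distinct probed nodes gives a collision probability of at most $\binom{o(n^2)}{2}/n^{10} = o(1/n^6)$.

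For the second event, I fix a query vertex $v \in V(G)$ and use the structural fact that the girth of $H$ equals the girth of $G$, namely $g$, so for every $r < g/2$ the ball $B_H(v,r)$ is a tree. In particular, $B_H(v, \lfloor g/2 \rfloor - 1)$ is a tree containing at least $(\Delta_H - 1)^{g/4} \geq n^{10}$ vertices at distance exactly $g/4$ from $v$, whereas $|V(G)| = n$. I then exploit a symmetry argument: since the random port permutations and random IDs are iid across the vertices of $H$, the joint distribution of $\mathcal{A}$'s exploration is invariant under the tree automorphisms of $B_H(v, \lfloor g/2 \rfloor - 1)$ fixing $v$, and these act transitively on each level $L_d$ with $d \leq \lfloor g/2 \rfloor - 1$. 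Thus, for any fixed vertex $u$ at depth $d$, writing $S$ for the random set of vertices probed during query $v$, one has $P(u \in S) \leq \E[|S \cap L_d|]/|L_d| \leq o(n)/(\Delta_H - 1)^{d-1}$, which for $d \geq g/4$ is at most $o(n)\cdot \Delta_H/n^{10}$. Summing over the at most $n$ vertices of $V(G)$ at distances in $[g/4, \lfloor g/2 \rfloor - 1]$ from $v$ yields a per-query failure probability of $O(1/n^8)$.

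It remains to handle $V(G)$-vertices at distance $\geq g/2$ from $v$, where $H$ is no longer globally tree-like because of $G$-cycles. Here I would iterate the same idea: to probe such a vertex, $\mathcal{A}$ must first probe some frontier vertex $u$ with $d_H(v,u)=\lfloor g/2 \rfloor - 1$, and, conditioned on this, the continued exploration inside $B_H(u, \lfloor g/2 \rfloor - 1)$ is symmetric in exactly the same way, so an analogous bound applies; summing over the at most $o(n)$ frontier vertices and then over the $n$ queries still yields an $o(1)$ overall bound, giving the desired $1 - o(1)$ success probability. The main technical obstacle I anticipate is making this iteration fully rigorous: because $\mathcal{A}$ is adaptive, one must carefully condition on the past exploration to verify that the fresh port permutations and IDs still induce the required tree symmetry in each secondary ball. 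A cleaner alternative would be to work in the universal cover $\tilde H$ of $H$ based at $v$, run the exploration there, and project back to $H$ via a short coupling argument that absorbs the cost of cycle closures in $H$ into the first-event bound.
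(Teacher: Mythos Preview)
Your treatment of the ID-collision part is fine and matches the paper. The symmetry idea for the second part is also morally the right one, and in fact your formulation is cleaner than the paper's chain of reductions to a ``guessing game.'' However, there is a genuine gap, and it is not where you think it is.

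The proposed iteration for $G$-vertices at distance $\ge g/2$ is both unnecessary and does not work as stated: once you condition on $\mathcal{A}$ having reached a specific frontier vertex $u$, the ports and IDs along the $v$--$u$ path are revealed and the level-transitivity you need in the secondary ball is destroyed; moreover $B_H(u,\lfloor g/2\rfloor-1)$ need not be a tree. The universal-cover suggestion has its own problem: each $G$-vertex has infinitely many lifts, so the union bound you want no longer closes. But the real issue is already present in your treatment of the range $[g/4,\lfloor g/2\rfloor-1]$. The tree automorphisms of $B_H(v,\lfloor g/2\rfloor-1)$ do \emph{not} extend to automorphisms of $H$: a $G$-vertex and a non-$G$-vertex on the same level have genuinely different exteriors (one can lead to a short cycle, the other cannot), and $\mathcal{A}$ may well leave the ball through one sphere vertex, wander outside, and then return to probe another. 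So the invariance you assert for ``$\mathcal{A}$'s exploration'' is not justified as written.

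Both problems are dissolved by one structural observation you are missing, and which the paper uses (implicitly in its ``Reduction~2''): by construction $H$ is $G$ with infinite regular trees glued on, each attached at a \emph{single} $G$-vertex. Consequently, if the connected probed set exits the $g/4$-ball through a non-$G$ sphere vertex $w$, everything reachable beyond $w$ without re-entering the ball lies in an attached tree and contains no $G$-vertex at all. Two consequences follow. First, probing \emph{any} $G$-vertex at distance $\ge g/4$ forces probing a $G$-vertex at distance exactly $g/4$; so the case $\ge g/2$ simply disappears. Second, since the exterior seen through a non-$G$ sphere vertex is an infinite regular tree with fresh iid ports/IDs, one may replace all outside exploration by a virtual simulation with the same distribution; after this replacement your symmetry argument becomes rigorous, because now the whole run is a deterministic function of iid data on a single regular tree rooted at $v$. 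The paper packages this as a sequence of reductions culminating in a guessing game on the $g/4$-sphere; with the structural observation in hand, your direct symmetry bound $P(u\in S)\le o(n)/|L_{g/4}|$ summed over the $\le n$ many $G$-vertices on that sphere gives the same $O(n^{-8})$ per query and is arguably tidier.
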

\begin{proof}
First, note that $\mathcal{A}$ performs at most $n^2$ probes to answer the $n$ queries and therefore sees at most $n^2$ different nodes. For $i \neq j \in [n^2]$, let $A_{ij}$ be the event that the $i$-th and the $j$-th vertex that $\mathcal{A}$ probes are distinct vertices and have the same ID. We have $Pr[A_{ij}] = 1/n^{10}$ and thus a union bound over the $n^4$ pairs implies that the first part of the lemma holds with probability $1- 1/n^6$.

Before showing that the second part of the lemma holds with probability at least $1/2$, we first give some intuition why it holds. Let $v$ be a node in $H$ corresponding to a node in $G$. As the girth of $H$ is $g$ and $H$ is $\Delta_H$-regular, the $g/4$-hop neighborhood around $v$ is a tree with each node having a degree of $\Delta_H$ except the leaf vertices. Note that the total number of leaf vertices is at least $n^{10}$, but at most $n$ of them correspond to nodes in $G$. Hence, it is intuitively hard for $\mathcal{A}$ to probe any of the leaf vertices corresponding to nodes in $G$. However, it needs to probe such a vertex in order to find a node in $G$ having a distance strictly greater than $g/4$. Though the intuition is simple, one needs to carefully argue that this intuition is correct. To do so, we perform a series of reductions. For the sake of contradiction, assume that $\mathcal{A}$ finds with probability at least $1/n^2$ a vertex in $G$ that has a distance of $g/4$ to a given queried vertex that corresponds to a node in $G$. 

\paragraph{Reduction 1: Omitting the Identifiers}
First, we argue that $\mathcal{A}$ does not really need the random identifiers. To do so, consider a \volume model variant in which the algorithm itself assigns IDs to vertices. That is, each time the algorithm probes a vertex it has not encountered before it can assign this vertex a new ID. Now, consider the following randomized algorithm $\mathcal{A}'$ that works in this new setting\ by assigning each newly encountered vertex an identifier uniformly and independently at random from the set $\{1,2,\ldots, n^{10}\}$ and then simulating algorithm $\mathcal{A}$ with these self-assigned IDs. Then, both $\mathcal{A}$ and $\mathcal{A}'$ have exactly the same probability of finding a vertex that corresponds to a vertex in $G$ and that has a distance of at least $g/4$ to the queried vertex $v$. 

\paragraph{Reduction 2: No Probes Outside the $g/4$-hop neighborhood}
For the next reduction, we consider the setting with self-assigned identifiers as above and add the following modification. Once an algorithm encounters some vertex with a distance of exactly $g/4$ to the originally queried vertex $v$, we tell the algorithm whether this vertex corresponds to a vertex in $G$ or not. Furthermore, we do not allow any probes outside the $g/4$-hop neighborhood around $v$. Thus, the algorithm only gets to know vertices inside the $g/4$-hop neighborhood of $v$. Now, we explain how to turn a randomized algorithm that works in the setting with the self-assigned IDs into an algorithm that has the exact same probability of finding a far-away vertex that corresponds to a vertex in $G$ in this new setting described above. We again simulate the randomized algorithm. If the algorithm encounters a vertex that corresponds to a vertex in $G$  with a distance of $g/4$ to $v$, then the algorithm simply stops the execution as the algorithm has achieved its goal. Otherwise, if the algorithm would like to make a probe outside the $g/4$-hop neighborhood, then the algorithm \emph{virtually} simulates these probes. The structure of the graph outside the $g/4$-hop neighborhood (not counting the vertices that correspond to nodes in $G$) is completely known to the algorithm without making any probes. The only unknown are the port assignments. However, the algorithm can make up a random port assignment in its mind (with exactly the same distribution as the port assignment would have in the real graph) and then answer the probes accordingly. It is easy to see that this new algorithm succeeds with the same probability as the previous algorithm.

\paragraph{Reduction 3: Guessing Game}
For a given port assignment of $H$, we define an order on all the nodes of distance precisely $g/4$ from $v$ in the following way: We associate with each such node the sequence of ports one needs to take to get from $v$ to the respective vertex. Then, we order the vertices according to the lexicographical order of their associated sequences. Let $N_{g/4} \geq n^{10}$ be the number of nodes with a distance of $g/4$ to $v$. With each port assignment $p$ of the $g/4$-hop neighborhood around $v$, we identify a tuple $\tup_p \in \{0,1\}^{N_{g/4}}$. We set $(\tup_p)_i = 1$ if the $i$-th vertex with respect to the order defined corresponds to a vertex in $G$ and 0 otherwise. Thus, $\tup_p$ contains at most $n$ non-zero entries.

Consider the distribution over $\{0,1\}^{N_{g/4}}$ that we get by choosing the port assignment at random as described in the beginning.
Now, we define the following game: A port assignment $p$ is chosen uniformly at random. This port assignment is generally unknown to the algorithm. The only information the algorithm obtains about the port assignment is, for each vertex, the port number corresponding to the edge leading to the parent of the vertex. By parent, we mean the parent with respect to the tree rooted at $v$ induced by the $g/4$-hop neighborhood of $v$. Now, the task of the (randomized) algorithm is to output an index set $I \subset [N_{g/4}]$ with $|I| \leq n$.
The algorithm is said to win the game if there exists an index $\ind \in I$ such that $(\tup_p)_{\ind} = 1$. 

We show that we can construct a (randomized) algorithm that wins the game with a probability of at least $1/n^2$ when given the algorithm from the previous setting (which finds a vertex that corresponds to a node in $G$ and that has a distance of $g/4$ to $v$ with probability of at least $1/n^2$) as a subroutine.
To do this, we  simulate the algorithm from the previous subsection. Then, we take $I$ as the index set corresponding to all the vertices of distance $g/4$ to $v$ that the algorithm ``finds'' during the execution. Note that in order to determine the corresponding index set, we need to know the position of any examined vertex in the lexicographical order.
However, this can be done as it is sufficient to know the port assignments of all the parents. To be able to simulate the algorithm from the previous section we need to provide the neighborhood oracle corresponding to the chosen port assignment $p$.
Although $p$ is not completely known, this can actually be done as it is sufficient to just know the port number corresponding to the parent. Furthermore, we need to answer whether a vertex of distance $g/4$ that the algorithm probes corresponds to a vertex in $G$ or not.
This is actually not possible. However, it is sufficient to always tell the algorithm that the vertex does not correspond to a node in $G$. Now, we show that we win the guessing game with probability at least $1/n^2$. This can be seen as follows: Let $p$ be the selected port assignment function and $r$ the random bits that the simulated algorithm uses.
Assume that the algorithm loses the game. Now, consider the simulated algorithm is run in the previous setting with the same port assignment function and the same random bits. Then, the algorithm would not have found a vertex of distance $g/4$ to $v$ that corresponds to a node in $G$. This follows as the execution is identical in both cases as the oracle calls and the answers to whether a vertex of distance $g/4$ to $v$ corresponds to a node in $G$ is answered in the same way in both cases.

\paragraph{Guessing Game is Impossible}
Next, we show that the game that we have defined above cannot be won with a probability of at least $1/n^2$. To that end, consider some fixed index set $I = \{\ind_1,..., \ind_k\}$ with $k \leq n$. For $i \in [k]$, we define $A_i$ as the event that the sequence $\seq \in \{0,1\}^{N_{g/4}}$ chosen in the game satisfies $\seq_{\ind_i} = 1$. Also note that $A_i$ is independent from the information given to the algorithm, i.e., the collection of port numbers corresponding, for each vertex, to the edge leading to its parent. By symmetry, we have

\[
    Pr[A_i] \leq \frac{n}{N_{g/4}} \leq \frac{1}{n^{9}}.
\]

Furthermore, we define $A = \bigcup_{i \in I} A_i$. $A$ corresponds exactly to the event that one wins the game if one had chosen the index set $I$. Thus, we can use a union bound to get:

$$Pr[A] = Pr[\bigcup_i A_i] \leq \sum_{i} Pr[A_i] \leq k \cdot \frac{1}{n^9} \leq \frac{1}{n^8}.$$

Hence, we arrived at a contradiction, which proves \cref{lem:no_far_vertex} (and therefore \cref{theorem:constant_coloring_theorem}).
\end{proof}

\section*{Acknowledgements}
We thank Mohsen Ghaffari, Jan Grebík, and Jukka Suomela for useful discussions. 
This project has received funding from the European Research Council (ERC) under the European
Unions Horizon 2020 research and innovation programme (grant agreement No. 853109).

\bibliographystyle{alpha}
\bibliography{ref}

\appendix

\section{Missing proofs from \cref{subsec:id_graph}}
\label{sec:appendix}

Here we prove \cref{lem:id_graph_existence}.
\begin{proof}
In the following, we assume that both $\Delta$ and $R$ are sufficiently large. Furthermore, we define $n := \Delta^{1000R}$. We let each $H_i$ be equal to an Erd\H{o}s-R\'enyi graph with $n$ vertices and where each edge is included with probability $p = \Delta^2 / n$.

The expected number of cycles of length less than $10R$ in $H$ is upper bounded by
\[
\sum_{j = 2}^{10R-1} n^j (\Delta p)^j = \sum_{j = 2}^{10R-1} \left(\Delta^3 \right)^j \leq \Delta^{10R} \leq n^{1/100}.
\]

Let $V_{cycle}$ denote the set of all vertices that are contained in a cycle of length less than $R$ in $H$. It holds that $|V_{cycle}| \leq n^{1/2}$ with probability at least $99/100$. 
Next, we denote with $V_{deg}$ the set of all vertices whose degree in some $H_i$ is $0$ or the degree is at least $\Delta^{10}$ in $H$. As the expected degree of each node in $H_i$ is at least $0.5\Delta^2$ and the expected degree of each node in $H$ is at most $\Delta^3$, a Chernoff Bound followed by a union bound implies that a given vertex $v \in V(H)$ is contained in $V_{deg}$ with probability at most $\Delta \cdot e^{-\Theta(\Delta^2)}$. Hence, the expected size of $V_{deg}$ is at most $ n \cdot \Delta \cdot e^{-\Theta(\Delta^2)}$. Thus, with probability at least $\frac{99}{100}$ it holds that $|V_{deg}| \leq \frac{n}{\Delta^{10}}$. \\
Next, we bound the probability that there exists a subset $S$ consisting of $\lceil \frac{2n}{\Delta^{10}} \rceil$ many vertices such that

\[\sum_{v \in S} deg_H(v) \geq \frac{n}{\Delta^{2}}.\]

For a fixed set $S$, the expected value of $\sum_{v \in S} deg_H(v)$ is at most

\[\frac{\Delta^3}{n}\binom{|S|}{2} \leq \frac{n}{2\Delta^{2}}.\]

Hence, by a Chernoff Bound followed by a union bound, the probability that we have such a set $S$ is at most 

\[\binom{n}{\lceil 2n/\Delta^{10} \rceil} \cdot e^{-\Theta(n/\Delta^{2})} \leq (e\Delta^{10})^{\lceil 2n/\Delta^{10} \rceil} \cdot e^{-\Theta(n/\Delta^{2})} \leq \frac{1}{100}.\]

Next, we bound the probability that the size of the largest independent set in $H_i$ is at least $\lceil n/\Delta^{3/2} \rceil$ for some $i \in [\Delta]$. We can upper bound the probability by

\begin{align*}
    \Delta \cdot \binom{n}{\lceil n/\Delta^{3/2} \rceil} \cdot \left(1 - \Delta^2  /n\right)^{\binom{\lceil n/\Delta^{3/2} \rceil}{2}}  &\leq \Delta \left(e \Delta^{3/2}\right)^{\lceil n/\Delta^{3/2} \rceil} e^{-\Theta(n/\Delta)} \\
    &\leq e^{\Theta(\log(\Delta)n/\Delta^{3/2})} e^{-\Theta(n/\Delta)} \\
    &= e^{-\Theta(n/\Delta)} \\
    &\leq \frac{1}{100}.
\end{align*}

Now, let $V_{rem} := V_{deg} \cup V_{cycle}$ and denote with $V_{fix}$ the set of all vertices that have at least one neighbor in $V_{rem}$. We can assume the following.

\begin{itemize}
    \item $|V_{rem}| \leq \frac{2n}{\Delta^{10}}$
    \item $|V_{fix}| \leq \frac{n}{\Delta^2}$ 
    \item For each $i \in [\Delta]$, the size of the largest independent set in $H_i$ is at most $\frac{n}{\Delta^{3/2}}$.
\end{itemize}

Now, let $H',H_1',\ldots,H_\Delta'$ denote the graphs obtained from $H,H_1,\ldots,H_\Delta$ by removing all the vertices in $V_{rem}$. The girth of $H'$ is at least $R$, the maximum degree of $H'$ is at most $\Delta^{10} - 1$ and $H'$ has at least $n/2$ vertices. Now, let $V_0$ denote the set of vertices that have a degree of $0$ in one of the $H'_i$'s. As $V_0 \subseteq V_{fix}$, we can deduce that $|V_0| \leq |V_{fix}|$. Next, we iteratively do the following: as long as there exists some vertex $v \in V(H')$ and some $i \in [\Delta]$ such that $deg_{H_i'}(v) = 0$, we add one edge incident to $v$ to $H_i'$ (and therefore $H'$) such that the girth of $H'$ is still at least $10R$ and the maximum degree of $H'$ is at most $\Delta^{10}$. Note that we add at most $|V_0|\Delta \leq \frac{n}{\Delta}$ edges in total. To show that we can always add such an edge, note that there are at most $(\Delta^{10})^{10R + 1} \leq \sqrt{n}$ vertices with a distance less than $R$ to $v$. Hence, there are at least $n/2 - \sqrt{n} - \frac{n}{\Delta}$ vertices with a degree of at most $\Delta^{10} - 1$ in $H'$ and a distance of at least $10R$ to $v$. Hence, we can add such an edge to $H'_i$ (and therefore $H'$) such that $H'$ still has a maximum degree of at most $\Delta^{10}$ and a girth of at least $10R$. Now, let $H'',H_1'',\ldots,H_\Delta''$ denote the graphs obtained after we have added all the edges. The resulting graph $H''$ has at least $n/2$ vertices and the girth of $H''$ is at least $10R$. Moreover, the degree of each vertex in $H_i''$ is at least $1$ and the size of the largest independent set in $H_i''$ is at most $\frac{n}{\Delta^{3/2}} < \frac{|V(H'')|}{\Delta}$.  Hence, $H_1'', H_2'', \ldots, H_\Delta''$ satisfy our desired properties.

\end{proof}

Next, we prove \cref{thm:so_hard_in_local}. The proof follows along the lines of \cite{brandt_LLL_lower_bound}. It is in fact simpler as one does not need to keep track of probabilities.
\begin{proof}
Assume that there is a $t$-round \local algorithm $\fA$ with $t \le k$ that solves Sinkless Orientation on an infinite $\Delta$-regular tree that is properly $\Delta$-edge colored and $H(k, \Delta)$-labeled.
We show that then there is also a $(t-1/2)$-round \local algorithm $\fA'$ for the same problem, that is, an algorithm where the decision of each edge $e = (u,v)$ depends on $B(e,t-1) := B(u,t-1) \cup B(v, t-1)$. The algorithm $\fA'$ considers all possible ways how $B(u, t) \setminus B(e,t-1)$ can be extended with an $H(k,\Delta)$-labeling and if there is one such extension such that $\fA$ orients the half-edge $(u,e)$ out, then $\fA'$ orients $e$ in the direction from $u$ to $v$. The same holds for $v$ and if neither case occurs, $e$ is oriented arbitrarily. Observe that it cannot happen that both $B(u,t) \setminus B(e,t-1)$ and $B(v,t) \setminus B(e, t-1)$ can be extended with an $H(k,\Delta)$-labeling such that $\fA$ orients both $(u,e)$ out and $(v,e)$ out as this would mean that $\fA$ is not correct. Here we importantly use the fact that a valid $H$-extension for $B(u, t) \setminus B(e,t-1)$ and a valid $H$-extension for $B(v, t) \setminus B(e,t-1)$ can be naturally ``glued'' together so as to yield one valid $H$-extension. 
An analogous reasoning allows us to transform $\fA'$ into a $(t-1)$-round algorithm $\fA''$. 

Repeating the above reasoning we conclude that a $0$-round algorithm $\fA^*$ exists. Such an algorithm decides for each vertex the orientation of its half-edges only based on its $H(k,\Delta)$-label. 
As for each vertex $\fA^*$ needs to orient at least one of its half-edges in the outward direction, we can color each vertex of $H(k,\Delta)$ with a color from $[\Delta]$ such that $\fA^*$ orients the edge with the respective color outwards. By the pigeonhole principle, there exists a color $c \in [\Delta]$ such that at least a $(1/\Delta)$-fraction of vertices of $H$ is colored with $c$. However, \cref{def:id_graph} implies that the set of vertices colored by $c$ in $H(k,\Delta)$ is not independent in $H_c$. Hence, we get an example of a two-node configuration where $\fA^*$ fails, a contradiction. 
\end{proof}

\end{document}